\newtheorem{definition}{Definition}
\newtheorem{proposition}[definition]{Proposition}
\newtheorem{lemma}[definition]{Lemma}
\newtheorem{theorem}[definition]{Theorem}
\newtheorem{corollary}[definition]{Corollary}
\newtheorem{example}[definition]{Example}
\newtheorem{observation}[definition]{Observation}
\newtheorem{open-question}[definition]{Open question}
\newcommand{\name}[1]{\textsc{#1}}
\newcommand{\hide}[1]{}
\newcommand{\Cantor}{{\{0,1\}^\mathbb{N}}}
\title{A semi-potential for finite and infinite games in extensive form}
\author{St\'{e}phane Le Roux
\institute{D\'epartement d'informatique\\ Universit\'e libre de
Bruxelles, Belgique}
\email{Stephane.Le.Roux@ulb.ac.be}
\and
Arno Pauly
\institute{D\'epartement d'informatique\\ Universit\'e libre de
Bruxelles, Belgique}
\email{\quad Arno.Pauly@cl.cam.ac.uk}
}
\begin{document}

 \maketitle

\begin{abstract}
We consider a dynamical approach to game in extensive forms. By restricting the convertibility relation over strategy profiles, we obtain a semi-potential (in the sense of Kukushkin), and we show that in finite games the corresponding restriction of better-response dynamics will converge to a Nash equilibrium in quadratic (finite) time. Convergence happens on a per-player basis, and even in the presence of players with cyclic preferences, the players with acyclic preferences will stabilize. Thus, we obtain a candidate notion for rationality in the presence of irrational agents. Moreover, the restriction of convertibility can be justified by a conservative updating of beliefs about the other players strategies.

For infinite games in extensive form we can retain convergence to a Nash equilibrium (in some sense), if the preferences are given by continuous payoff functions; or obtain a transfinite convergence if the outcome sets of the game are $\Delta^0_2$-sets.
\end{abstract}

\section{Introduction}
The Nash equilibria are the fixed points of the better (or best) response dynamics. In graph theory they would be called the sinks of these dynamics, in computer science they may be called their terminal strategy profiles. In general these dynamics do not terminate, \textit{i.e.} the corresponding binary relations over strategy profiles are not well-founded. In finite games it amounts to the existence of a cycle.

A particular exception is found in potential games \cite{shapley}: A potential is an acyclic relation over the profiles that includes all the player (\textit{i.e.} individual) better-response dynamics. In a potential game, better-response dynamics thus will always improve the potential, and hence terminates (at a Nash equilibrium) if the game is finite.

The notion of semi-potential was introduced by \name{Kukushkin} (\cite{kukushkin2}, also \cite{kukushkin3}) in order to salvage some of the nice properties of potential games for a larger class of games. Here, each player's freedom to change strategies is restricted -- however, only in such a way that if she could change the current outcome to a particular outcome in the absence of restriction, she can still do so in a way that is consistent with the restriction. In a \emph{generic} normal form game this is equivalent to a potential, as there different strategies will induce different outcomes. Nevertheless, several classes of non-generic games have no potential but have a semi-potential: \cite[Theorem 3]{kukushkin2} proved that it is the case for finite real-valued games in extensive form.

In this article we study Kukushkin's restriction of the convertibility relation (we call it \emph{lazy convertibility}) as well as the resulting better-response dynamics (\emph{lazy improvement}) in some more detail. We give two alternative proofs of the termination at a Nash equilibrium in finite games, one of which yields a tight quadratic bound on the number of steps required. Moreover, our two proofs of termination work on a per player basis: Thus, each player with acyclic preference will terminate, even in the presence of players with cyclic preferences. Then we extend these results to infinite games in extensive form in several ways. A very specific infinite setting was already explored in \cite{boros}, and lazy improvement in infinite games in extensive form with continuous payoff functions was investigated by the authors in \cite{paulyleroux2}.

Some relevant properties of the lazy improvement are:
\begin{itemize}
\item The dynamics are uncoupled: Each player bases her decisions on her own preference, but she does not need to know the other players' preferences.
\item The dynamics are history-independent: Unlike e.g.~fictitious play or typical regret-minimization approaches (e.g.~\cite{hart}), the next step in the dynamics depends only on the current strategies of the players. In particular, players do not need memory for \emph{learning}.
\item We consider pure strategies, not stochastic ones. Thus, our approach has a very different flavour from the usual evolutionary game theory one (e.g.~\cite{cressman,cressman2,ziboxu}).
\item No restrictions akin to \emph{generic payoffs} are required, we merely need acyclic preferences to guarantee termination at a Nash equilibrium in finite games (and anyway this requirement cannot be avoided for existence of Nash equilibrium \cite{SLR-PhD08,SLR09}).
\item In a finite game with acyclic preferences, the dynamics stabilizes at a Nash equilibrium after a quadratic number of steps.
\item The stabilization result for the rational players, \textit{i.e.} with acyclic preferences, remains unaffected, if unpredictable players, \textit{i.e.} with cyclic preferences, are added.
\item Under some conditions, even in an infinite game in extensive form we can ensure stabilization at a Nash equilibrium after a transfinite number of steps.
\end{itemize}

The rest of the paper is organized as follows: Section~\ref{sec:b-n} recalls the definitions of game in normal form, game in extensive form, and the better-response dynamics. Section \ref{sec:definition} introduces the core concept of \emph{lazy improvement}. Section \ref{sec:finite} proves that in a finite game, lazy improvement terminates at a Nash equilibrium. Section~\ref{sect:finite-2nd-proof} gives an alternative proof also showing that termination occurs after a quadratic number of improvement steps. Section~\ref{sec:beliefs} gives a basic epistemic justification for lazy convertibility. In Section \ref{sec:infinite} we discuss extensions to infinite games. Finally, Section \ref{sec:counter} provides a number of (counter)examples showing that, to some extent, our definitions have to be the way they are. An extended abstract based on this work is \cite{leroux2-gandalf}. Subsection \ref{subsec:continuous} is based on \cite[Section VI]{paulyleroux2}.

\section{Background and Notation}\label{sec:b-n}
This section recalls the definitions of game in normal form, game in extensive form, and the better-response dynamics.

\begin{definition}\label{defn:gnf}
A game in normal form is a tuple $\langle A,(S_a)_{a\in A},O,v,(\prec_a)_{a\in A}\rangle$ satisfying the following:
\begin{itemize}
\item $A$ is a non-empty set (of players, or agents),
\item $\prod_{a\in A}S_a$ is a non-empty Cartesian product (whose elements are the \emph{strategy profiles} and where $S_a$ represents the strategies available to Player $a$),
\item $O$ is a non-empty set (of possible outcomes),
\item $v:\prod_{a\in A} S_a\to O$ (the outcome function that values the strategy profiles),
\item Each $\prec_a$ is a binary relation over $O$ (modelling the preference of Player $a$).
\end{itemize}
\end{definition}

\begin{definition}[Nash equilibrium]\label{defn:ne}
Let $\langle A,(S_a)_{a\in A} ,O,v,(\prec_a)_{a\in A}\rangle$ be a game in normal form. A strategy profile (profile for short) $s$ in $S:=\prod_{a\in A} S_a$ is a Nash equilibrium if it makes every Player $a$ stable, \textit{i.e.} $v(s)\not\prec_a v(s')$ for all $s'\in S$ that differ from $s$ at most at the $a$-component.
\[NE(s)\quad:=\quad\forall a\in A,\forall s'\in S,\quad\neg(v(s)\prec_a v(s')\,\wedge\,\forall b\in A-\{a\},\,s_b= s'_b)\]
\end{definition}

Implicit in the concept of Nash equilibrium is the notion of \emph{convertibility}: An agent can convert one strategy profile to another, if they differ only in her actions. As lazy improvement will be introduced in Section~\ref{sec:definition} by restricting the convertibility relation, we provide a formal definition:

\begin{definition}[Convertibility, induced preference over profiles, and improvement]\label{defn:asyn-improv}\hfill
\begin{itemize}
\item Let $\langle A,(S_a)_{a\in A},O,v,(\prec_a)_{a\in A}\rangle$ be a game in normal form. For $s, s' \in \prod_{a\in A}S_a$, let $s\stackrel{c}{\twoheadrightarrow}_as'$ denote the ability of Player $a$ to convert $s$ to $s'$ by changing her own strategy, formally $s\stackrel{c}{\twoheadrightarrow}_as':=\forall b\in A-\{a\},\,s_b=s'_b$.

\item Given a game $\langle A,(S_a)_{a\in A},O,v,(\prec_a)_{a\in A}\rangle$, let $s\prec_a s'$ denote $v(s)\prec_a v(s')$. So in this article $\prec_a$ may also refer to the induced preference over the profiles.

\item Let $\twoheadrightarrow_a\,:=\,\prec_a\cap\stackrel{c}{\twoheadrightarrow}_a$ be the individual improvement relations of the players and let $\twoheadrightarrow\,:=\,\cup_{a\in A}\twoheadrightarrow_a$ be the better-response dynamics.
\end{itemize}
\end{definition}

Observation~\ref{obs:ne-sink} below is a direct consequence of Definitions~\ref{defn:ne} and \ref{defn:asyn-improv}.

\begin{observation}\label{obs:ne-sink}
The Nash equilibria of a game are exactly the sinks, \textit{i.e.}, the terminal profiles of the better-response dynamics $\twoheadrightarrow$.
\end{observation}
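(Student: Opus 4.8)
The plan is to prove the stated equivalence by directly unfolding the two definitions and checking that, after elementary quantifier manipulation, they coincide; no deeper structure is needed. Fix a game and a profile $s\in S$. By the definition of a sink, $s$ is terminal for $\twoheadrightarrow$ exactly when no profile is reachable from it in one step, that is, $\neg\exists s'\in S,\ s\twoheadrightarrow s'$.

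First I would expand the collective improvement. Since $\twoheadrightarrow\,=\,\cup_{a\in A}\twoheadrightarrow_a$, the existence of a successor of $s$ is the same as $\exists a\in A,\ \exists s'\in S,\ s\twoheadrightarrow_a s'$; and since $\twoheadrightarrow_a\,=\,\prec_a\cap\stackrel{c}{\twoheadrightarrow}_a$, each conjunct $s\twoheadrightarrow_a s'$ unfolds to $v(s)\prec_a v(s')\ \wedge\ \forall b\in A-\{a\},\ s_b=s'_b$, using that the induced relation $\prec_a$ on profiles is by definition $v(s)\prec_a v(s')$ and that $\stackrel{c}{\twoheadrightarrow}_a$ is precisely the convertibility predicate of Definition~\ref{defn:asyn-improv}.

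Next I would negate this existential statement. Negating $\exists a\,\exists s'$ yields $\forall a\,\forall s'$ and sends the matrix to its negation, producing $\forall a\in A,\ \forall s'\in S,\ \neg\big(v(s)\prec_a v(s')\ \wedge\ \forall b\in A-\{a\},\ s_b=s'_b\big)$, which is verbatim the formula $NE(s)$ of Definition~\ref{defn:ne}. Hence $s$ is a sink of $\twoheadrightarrow$ if and only if $NE(s)$ holds, which is exactly the asserted correspondence between the two sets.

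I do not expect any genuine obstacle: the argument is a purely logical unfolding, the only points requiring any care being the renaming/commutation of the two existential quantifiers $\exists a$ and $\exists s'$ and the distribution of the negation over the conjunction. It is worth stressing that this derivation places no hypothesis whatsoever on the relations $\prec_a$ (neither transitivity, reflexivity, nor totality), so the identification of Nash equilibria with sinks of the collective improvement holds for entirely arbitrary preferences, precisely as Observation~\ref{obs:ne-sink} claims.
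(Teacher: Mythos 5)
Your proof is correct and matches the paper exactly: the paper gives no explicit argument, stating only that the observation ``is a direct consequence of Definitions~\ref{defn:ne} and \ref{defn:asyn-improv}'', and your quantifier unfolding is precisely that direct consequence spelled out. Nothing is missing, and your closing remark that no hypothesis on the $\prec_a$ is needed is also in line with the paper's emphasis on arbitrary preferences.
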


A (generalized) \emph{potential} is an acyclic relation containing $\twoheadrightarrow$. Clearly a game has a potential iff  $\twoheadrightarrow$ is acyclic. If $\prod_{a\in A}S_a$ is finite, this is equivalent to the termination of the better-response dynamics. A less restrictive notion is a semi-potential (introduced in \cite{kukushkin2}). A semi-potential is an acyclic relation $\hookrightarrow$ contained in $\twoheadrightarrow$, such that whenever $s \twoheadrightarrow s'$ then there is some $s''$ with $s \hookrightarrow s''$ and $v(s) = v(s'')$. In words, if a strategy profile can be reached by an improvement step, there is an equivalent strategy profile (w.r.t the induced outcome) reachable via a step in the semi-potential. It follows that the sinks of a semi-potential are exactly the sinks of the better-response dynamics. Thus, in a finite setting, the existence of a semi-potential in particular implies the existence of sinks, i.e.~Nash equilibria.

Our setting will be games in extensive form, rather than games in normal form. The idea here is that the players collectively choose a path through a tree, with each player deciding the direction at the vertices that she is controlling. The preferences refer only to the path created, choices off the chosen path are irrelevant. Thus, the evaluation map $v$ is highly non-injective, which in turn gives room for the notion of a semi-potential to be interesting. Formally, we define games in extensive form as follows:

\begin{definition}
A game in extensive form is a tuple $(A, T, O, d, v, (\prec_a)_{a \in A})$ where
\begin{itemize}
\item $A$ is the non-empty set of players,
\item $T$ is a rooted tree (finite or infinite),
\item $O$ is the non-empty set of outcomes,
\item $d$ associates a player with each vertex in the tree,
\item $v$ associates an outcome with each maximal path from the root through the tree,
\item and for each Player $a \in A$, $\prec_a$ is a relation on $O$ (the preference relation of $a$).
\end{itemize}
\end{definition}

The corresponding game in normal form is obtained as follows: Let a strategy of Player $a$ associate an outgoing edge with each vertex controlled by $a$. If a strategy per player is given, the collective choices identify some maximal path $p$ through the tree, called the \emph{induced play}. Applying $v$ to that path yields the outcome of the game; i.e.~the valuation of the game in normal form is the composition of the map that identifies the induced play and the valuation of the game in extensive form.

In our concrete examples, the outcomes will be tuples of natural numbers, and the $n$-th player will prefer a tuple $(x_1,\ldots,x_{|A|})$ to $(y_1,\ldots,y_{|A|})$ iff $x_n > y_n$.

\section{Defining \emph{lazy improvement}}
\label{sec:definition}
The idea underlying lazy improvement is that we do not let a player change their irrelevant choices, i.e.~those choices not along the play induced after the improvement. Equivalently, we require a player to change as few choices as possible when changing the induced play.


\begin{definition}[Lazy convertibility and improvement]\label{defn:lazy-conv}\hfill
\begin{itemize}
\item For two strategy profiles $s$, $s'$ in a game in extensive form let $s\stackrel{c}{\rightharpoonup}_a s'$ (read: \emph{Player $a$ can lazily convert $s$ into $s'$}), if for every vertex $t \in T$, if $s(t) \neq s'(t)$, then $d(t) = a$ and $t$ lies along the play induced by $s'$. Let $\stackrel{c}{\rightharpoonup} := \cup_{a \in A} \stackrel{c}{\rightharpoonup}_a$.
\item Let $\rightharpoonup_a\,:=\,\prec_a\cap\stackrel{c}{\rightharpoonup}_a$ be the lazy improvement of Player $a$ and let $\rightharpoonup\,:=\,\cup_{a\in A}\rightharpoonup_a$ be the lazy better-response dynamics, or lazy improvement.
\end{itemize}
\end{definition}

Let us exemplify the notion of lazy convertibility, which has nothing to do with the preferences or the outcomes: Player $a$ can lazily convert the leftmost strategy profile below into each of the profiles below, but not into any other profile. Strategy choices are represented by double lines in the pictures, \textit{e.g.} Player $a$ chooses left instead of right at each node of the leftmost profile. Also for each other profile, Player $a$ is written bold face at nodes where the profile differ from the leftmost one.

\begin{tabular}{cccc}
\begin{tikzpicture}[level distance=7mm]
\node{a}[sibling distance=16mm]
	child{node{a}[sibling distance=8mm] edge from parent[double]
		child{node{}edge from parent[double]}
		child{node{}}
	}
	child{node{a}[sibling distance=8mm]
			child{node{}edge from parent[double]}
			child{node{}}
	};
\end{tikzpicture}
&
\begin{tikzpicture}[level distance=7mm]
\node{a}[sibling distance=16mm]
	child{node{\bf{a}}[sibling distance=8mm] edge from parent[double]
		child{node{}}
		child{node{}edge from parent[double]}
	}
	child{node{a}[sibling distance=8mm]
			child{node{}edge from parent[double]}
			child{node{}}
	};
\end{tikzpicture}
&
\begin{tikzpicture}[level distance=7mm]
\node{\bf{a}}[sibling distance=16mm]
	child{node{a}[sibling distance=8mm]
		child{node{}edge from parent[double]}
		child{node{}}
	}
	child{node{a}[sibling distance=8mm]edge from parent[double]
			child{node{}edge from parent[double]}
			child{node{}}
	};
\end{tikzpicture}
&
\begin{tikzpicture}[level distance=7mm]
\node{\bf{a}}[sibling distance=16mm]
	child{node{a}[sibling distance=8mm]
		child{node{}edge from parent[double]}
		child{node{}}
	}
	child{node{\bf{a}}[sibling distance=8mm]edge from parent[double]
			child{node{}}
			child{node{}edge from parent[double]}
	};
\end{tikzpicture}
\end{tabular}

\noindent Contrary to the convertibility relations $\stackrel{c}{\twoheadrightarrow}_a$ which are equivalence relations, the lazy convertibility relations $\stackrel{c}{\rightharpoonup}_a$ are certainly reflexive but in general neither symmetric nor transitive. For instance, Player $a$ cannot lazily convert the rightmost profile above back into the leftmost one. In the additional example below, Player $a$ can convert the leftmost profile to the middle profile but not to the rightmost profile.

\begin{tabular}{ccc}
\begin{tikzpicture}[level distance=7mm]
\node{a}[sibling distance=20mm]
	child{node{a}[sibling distance=10mm]edge from parent[double]
		child{node{a}[sibling distance=8mm]edge from parent[double]
			child{node{}edge from parent[double]}
			child{node{}}
		}
		child{node{a}[sibling distance=8mm]
			child{node{}edge from parent[double]}
			child{node{}}
		}
	}
	child{node{a}[sibling distance=10mm]
		child{node{a}[sibling distance=8mm]edge from parent[double]
			child{node{}edge from parent[double]}
			child{node{}}
		}
		child{node{a}[sibling distance=8mm]
			child{node{}edge from parent[double]}
			child{node{}}
		}
	};
\end{tikzpicture}
&
\begin{tikzpicture}[level distance=7mm]
\node{\bf{a}}[sibling distance=20mm]
	child{node{a}[sibling distance=10mm]
		child{node{a}[sibling distance=8mm]edge from parent[double]
			child{node{}edge from parent[double]}
			child{node{}}
		}
		child{node{a}[sibling distance=8mm]
			child{node{}edge from parent[double]}
			child{node{}}
		}
	}
	child{node{a}[sibling distance=10mm] edge from parent[double]
		child{node{\bf{a}}[sibling distance=8mm]edge from parent[double]
			child{node{}}
			child{node{}edge from parent[double]}
		}
		child{node{a}[sibling distance=8mm]
			child{node{}edge from parent[double]}
			child{node{}}
		}
	};
\end{tikzpicture}
&
\begin{tikzpicture}[level distance=7mm]
\node{\bf{a}}[sibling distance=20mm]
	child{node{a}[sibling distance=10mm]
		child{node{a}[sibling distance=8mm]edge from parent[double]
			child{node{}edge from parent[double]}
			child{node{}}
		}
		child{node{a}[sibling distance=8mm]
			child{node{}edge from parent[double]}
			child{node{}}
		}
	}
	child{node{a}[sibling distance=10mm] edge from parent[double]
		child{node{\bf{a}}[sibling distance=8mm]edge from parent[double]
			child{node{}}
			child{node{}edge from parent[double]}
		}
		child{node{\bf{a}}[sibling distance=8mm]
			child{node{}}
			child{node{}edge from parent[double]}
		}
	};
\end{tikzpicture}
\end{tabular}

Since several forthcoming proofs invoke induction over the tree structure of the games, we note below that lazy convertibility could also be defined inductively.

\begin{observation} The inductive definition below is equivalent to Definition~\ref{defn:lazy-conv}.
\begin{itemize}
\item If $s$ is a leaf profile, let us define $s\stackrel{c}{\rightharpoonup}_a s$ for all $a\in A$.
\item Let two profiles $s$ and $s'$ have the profiles $s_0,\dots,s_n$ and $s'_0,\dots,s'_n$ as respective children. Let Player $a$ choose $s_i$ at the root of $s$ and $s'_k$ at the root of $s'$ and assume that $s_j=s'_j$ for $j\neq k$. If $s_k\stackrel{c}{\rightharpoonup}_b s'_k$ and if $b=a$ or $i=k$, let us define $s\stackrel{c}{\rightharpoonup}_b s'$.
\end{itemize}
\end{observation}

The lazy convertibility enjoys a useful property that the usual convertibility does not: if a player changes a play $p$ into another play during a sequence of lazy convertibility, only the very same player might be later able to make the last step to induce $p$ again, possibly induced by a different profile. This phenomenon is more formally stated by Lemma~\ref{lem:avoid-play}.

\begin{lemma}\label{lem:avoid-play}
If $s\stackrel{c}{\rightharpoonup}_as_0\stackrel{c}{\rightharpoonup}\dots\stackrel{c}{\rightharpoonup}s_n\stackrel{c}{\rightharpoonup}_bs'$ where $s$ and $s'$ induce the same play, and if this play is different from the plays that are induced by the $s_i$, then $a=b$.
\end{lemma}

\begin{proof}
Let us prove the claim by induction on the underlying game. Since the play induced by $s_0$ is different from the play induced by $s$, these profiles are not just leaves, but proper trees instead. During the assumed $\stackrel{c}{\rightharpoonup}$ reduction of $s$, its subprofile that is chosen by the root owner in $s$ undergoes a $\stackrel{c}{\rightharpoonup}$ reduction too, say $t\stackrel{c}{\rightharpoonup}_at_0\stackrel{c}{\rightharpoonup}\dots\stackrel{c}{\rightharpoonup}t_n\stackrel{c}{\rightharpoonup}_bt'$, where $t$ and $t'$ induce the same play (and the root owner chooses $t'$ in $s'$). If all these subprofiles are equal, Player $a$ must be the root owner (of $s$), since $s$ and $s_0$ induce different plays by assumption, and $b$ is also the root owner since $s_n$ and $s'$ induce different plays, so $a=b$. Now let $t_j$ be the first subprofile different from $t$, so $t_j$ induces a play different from $t$ and $t'$. For  all $k$ such that $j \leq k < n$, if $t_k$ and $t'$ induce different plays but $t_{k+1}$ and $t'$ induce the same play, then $s_{k+1}$ and $s'$ induce the same play by definition of $\stackrel{c}{\rightharpoonup}$, contradiction with the assumptions of the lemma, so all $t_j,\dots t_n$ induce plays different from that of $t'$. If $t_1\neq t$, then $a=b$ by the induction hypothesis, else $a$ must be the root owner and does not choose $t_1$ in $s_1$. The first time that $a$ chooses some $t_i$ again must be in $s_j$: indeed if it were before, $s$ and $s_i$ would induce the same play, and if it were after, $a$ could not change $t_{j-1}$ into $t_j$. Therefore $t_{j-1}\stackrel{c}{\rightharpoonup}_at_j\stackrel{c}{\rightharpoonup}\dots\stackrel{c}{\rightharpoonup}t_n\stackrel{c}{\rightharpoonup}_bt'$ and $a=b$ by the induction hypothesis.
\end{proof}

Observation~\ref{obs:lazy-effective} below shows that despite the restrictive property from Lemma~\ref{lem:avoid-play}, the lazy convertibility is as effective as the usual convertibility, in the same sense as used in the definition of a semi-potential. (Thus, it will only remain to prove that lazy improvement is acyclic in order to establish lazy improvement as a semi-potential).

\begin{observation}\label{obs:lazy-effective}
\label{obs:outcomes}
If $s \twoheadrightarrow s'$, there is some strategy profile $s''$ such that $s \rightharpoonup s''$ and $v(s') = v(s'')$.
\begin{proof}
By definition, lazy convertibility does not restrict the choice of the new induced play, merely the ability to alter the strategy off the new induced play.
\end{proof}
\end{observation}

\begin{corollary}\label{corr:nash-lazy-term}
The Nash equilibria of a game are exactly the terminal profiles of the lazy improvement $\rightharpoonup$.
\end{corollary}

\section{Termination in finite games}
\label{sec:finite}

This section presents two proofs. The first proof consists in showing acyclicity of the lazy improvement by contradiction, which carries over to infinite games. The second proof is closer to the original proof of \cite{kukushkin2}, and it yields tight bounds on the number of lazy improvement steps occurring before termination.

\subsection{First proof, by contradiction}

\begin{theorem}\label{thm:lazy-term}
Consider a game in extensive form played on a finite tree, and some sequence $(s_n)_{n \in \mathbb{N}}$ such that $s_n \rightharpoonup s_{n+1}$ for all $n \in \mathbb{N}$. Assume that for a Player $a$ there are infinitely many $n$ with $s_n \rightharpoonup_a s_{n+1}$. Then $a$ has a cyclic preference.
\end{theorem}

\begin{proof}
Towards a contradiction let us assume that $a$'s preference is acyclic. Among the profiles $s$ such that $s = s_n \rightharpoonup_a s_{n+1}$ for infinitely many $n$, let $s$ be minimal for $a$'s preference, and let $M$ be large enough such that every profile $s_n$ with $M < n$ occurs infinitely often in the sequence. Let $s = s_n$ for some $n > M$, and let $k > n$ be the least $k$ such that $s_n$ and $s_k$ induce the same play. Lemma~\ref{lem:avoid-play} implies that $s_{k-1} \rightharpoonup_a s_k$, so Player $a$ prefers the outcome of $s_n$ over that of $s_{k-1}$, contradiction.
\end{proof}

Together with Corollary~\ref{corr:nash-lazy-term} the following corollary shows the equivalence between all preferences being acyclic and universal existence of NE.

\begin{corollary}\label{cor:all-acycl}
Consider outcomes $O$, players $A$, and their preferences $(\prec_a)_{a \in A}$: All $\prec_a$ are acyclic iff for all finite games in extensive form built from $O$, $A$ and $(\prec_a)_{a \in A}$ the lazy better-response dynamics terminates.
\end{corollary}

\begin{proof}
The difficult implication of the equivalence is a corollary of Theorem~\ref{thm:lazy-term}. For the other implication, note that if $x_0\prec_a x_1\prec_a\dots\prec_a x_n\prec_a x_0$, then $\rightharpoonup_a$ does not terminate on the profile below.

\begin{tikzpicture}[level distance=7mm]
\node{a}[sibling distance=12mm]
	child{node{$x_0$}[sibling distance=10mm]edge from parent[double]}
	child{node{$x_1$}[sibling distance=10mm]}
	child{node{...}[sibling distance=10mm]}
	child{node{$x_n$}[sibling distance=10mm]}
	;
\end{tikzpicture}
\end{proof}

\begin{corollary}[Kukushkin]\label{cor:acyclic-semi-pot}
In a finite game in extensive form where every player has acyclic preferences, lazy improvement is a semi-potential.
\begin{proof}
Combine Corollary \ref{corr:nash-lazy-term} and Corollary \ref{cor:all-acycl}.
\end{proof}
\end{corollary}

\name{Kukushkin} (\cite[Theorem 3]{kukushkin2}) proved Corollary~\ref{cor:acyclic-semi-pot} in the case where the preferences are derived from payoffs. In this specific (yet usual) setting, it is not possible to consider players with cyclic preferences, so Theorem~\ref{thm:lazy-term} or Corollary~\ref{cor:all-acycl} cannot even be stated.

Based on Corollary \ref{cor:all-acycl} we obtain a reasonable candidate for rational behaviour in games in extensive form played with an unpredictable nature or erratic players: Perform lazy improvement until the players with acyclic preferences no longer change their strategies. It is always consistent with the observations to assume that the changes in another player's strategy are based on lazy convertibility. This argument is explored in more detail in Section \ref{sec:beliefs}. Nature can then be modelled as a player with the full relation as preferences, such that any convertible step for nature becomes an improvement step.

\subsection{Second proof, with bounds}\label{sect:finite-2nd-proof}

The proof of Theorem~\ref{thm:lazy-term}, by contradiction, gives a quick argument but no deep insight on how and how fast the relation terminates. A stronger statement can proven by using the multiset of outcomes avoided by a Player $a$ (i.e.~the outcomes obtained in a subgame, where the decision not to play into that subgame was made by $a$, see Definition~\ref{defn:sgdo}) to construct a measure that will decrease on any lazy improvement step by $a$ (Lemma~\ref{lem:lazy-diff}), and remain unchanged by any lazy convertibility step by a different player (Lemma~\ref{lem:lazy-same}). Thus, we are in a situation very similar to potential games \cite{shapley} -- however, in a potential game a player \emph{can} increase the potential (which is common to all the players) but does not \emph{want} to, whereas here the players \emph{cannot} impact the measure of another player as long as they are restricted to lazy convertibility.

\begin{definition}[Avoided outcomes of a game and of a profile]\label{defn:sgdo}
The avoided outcomes of a game $g$ is a function $\Delta(g)$ of type $A\to\mathbb{N}$, and it is defined inductively below.
\begin{itemize}
\item $\Delta(g,a):=0$ if $g$ is a leaf game.
\item If Player $a$ owns the root of a game $g$ whose children are $g_0,\dots,g_n$ then
\begin{itemize}
\item $\Delta(g,b):=\sum_{j=0}^n\Delta(g_j,b)$ for all $b\neq a$.
\item $\Delta(g,a):=\big(\sum_{j=0}^n\Delta(g_j,a)\big)+n$
\end{itemize}
\end{itemize}

The avoided outcomes of a profile $s$ is a function $\delta(s)$ of type $A\to O\to\mathbb{N}$, or equivalently in this case, of type $A\times O\to\mathbb{N}$, and it is defined inductively below.
\begin{itemize}
\item $\delta(s,a,o):=0$ if $s$ is a leaf profile.
\item If Player $a$ owns the root of a profile $s$ and chooses the subprofile $s_i$ among $s_0,\dots,s_n$ then
\begin{itemize}
\item $\delta(s,b,o):=\sum_{j=0}^n\delta(s_j,b,o)$ for all $b\neq a$.
\item $\delta(s,a,o):=\big(\sum_{j=0}^n\delta(s_j,a,o)\big)+|\{j\in\{0,\dots,n\}-\{i\}\,\mid\,v(s_j)=o\}|$
\end{itemize}
\end{itemize}
\end{definition}

The smaller array below describes the function $\Delta(g)$, where $g$ is the underlying game of the left-hand profile $s$ below, and the right-hand array describes the function $\delta(s)$. For instance $\delta(s,b,y)=2$ because Player $b$ avoids the outcome $y$ twice: once at the left-most internal node, after two leftward moves, when choosing outcome $x$ rather than $y$, and also once after one rightward move, also when choosing $x$ rather than $y$. Note that the only leaf that is not accounted for by the function of the avoided outcome of a profile/game is the leaf that is induced by the profile.

\begin{displaymath}
\begin{array}{c@{\hspace{2cm}}c@{\hspace{1cm}}c}
\begin{tikzpicture}[level distance=7mm]
\node{a}[sibling distance=36mm]
	child{node{b}[sibling distance=18mm] edge from parent[double]
		child{node{b}[sibling distance=8mm]edge from parent[double]
			child{node{$x$}edge from parent[double]}
			child{node{$y$}}
		}
		child{node{a}[sibling distance=8mm]
			child{node{$z$}edge from parent[double]}
			child{node{$t$}}
		}
	}
	child{node{b}[sibling distance=18mm]
		child{node{a}[sibling distance=8mm]edge from parent[double]
			child{node{$x$}edge from parent[double]}
			child{node{$t$}}
			child{node{$t$}}
		}
		child{node{a}[sibling distance=8mm]
			child{node{$y$}edge from parent[double]}
			child{node{$z$}}
		}
	}
	;
\end{tikzpicture}
&
\begin{array}{|c|}
	\cline{1-1}
	\Delta(g,\cdot)\\
	\cline{1-1}
	a \mapsto 5\\
	\cline{1-1}
	b \mapsto 3\\
	\cline{1-1}
\end{array}
&
\begin{array}{|c@{\;\vline\;}c@{\;\vline\;}c@{\;\vline\;}c@{\;\vline\;}c|}
	\cline{1-5}
	\delta(s,\cdot,\cdot) & x & y & z & t\\
	\cline{1-5}
	a & 1 & 0 & 1 & 3\\
	\cline{1-5}
	b & 0 & 2 & 1 & 0\\
	\cline{1-5}
\end{array}
\end{array}
\end{displaymath}

Observation~\ref{obs:sgdo} below relates the two functions from Definition~\ref{defn:sgdo}. It refers to $s2g$, a function that returns the underlying game of a given profile, see \cite{SLR-PhD08} or \cite{SLR09} for a proper definition.

\begin{observation}\label{obs:sgdo}
\begin{enumerate}
\item\label{obs:sgdo1} Let $s$ be a profile and $a$ be a player, then $\Delta(s2g(s),a)=\sum_{o\in O}\delta(s,a,o)$.
\item\label{obs:sgdo2} Let $g$ be a game, then $1+\sum_{a\in A}\Delta(g,a)$ equals the number of leaves of $g$.
\end{enumerate}
\end{observation}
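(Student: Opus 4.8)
The plan is to prove both statements by structural induction on the game tree, following the inductive clauses of $\Delta$ and $\delta$ in Definition~\ref{defn:sgdo}. In each part the base case is a single leaf: all the $\Delta$- and $\delta$-values vanish, so part~\ref{obs:sgdo1} reads $0=0$ and part~\ref{obs:sgdo2} reads $1+0=1$, matching the one leaf. For both inductive steps I would fix a game $g$ (resp. profile $s$) whose root is owned by some player and whose children are $g_0,\dots,g_n$ (resp. subprofiles $s_0,\dots,s_n$), and expand each side of the equality using the recursive clauses.

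For part~\ref{obs:sgdo1} I would distinguish whether the root owner is the player $a$ in question. When the owner is some $b\neq a$, both $\Delta(\cdot,a)$ and $\delta(\cdot,a,o)$ are simply summed over the children, so after interchanging the sums $\sum_{o}$ and $\sum_{j}$ the claim follows termwise from the induction hypothesis applied to each $g_j=s2g(s_j)$. When the owner is $a$ itself, the extra summand on the $\delta$-side is $|\{j\neq i\mid v(s_j)=o\}|$. The key observation is that each dismissed subprofile $s_j$ (with $j\neq i$) induces exactly one outcome $v(s_j)$, so as $o$ ranges over $O$ these sets partition the $n$ dismissed children; hence $\sum_{o\in O}|\{j\neq i\mid v(s_j)=o\}|=n$, which is precisely the extra summand $+n$ on the $\Delta$-side. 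Combining this with the induction hypothesis on the children closes the step.

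For part~\ref{obs:sgdo2} I would compute the number of leaves of $g$ as the sum over children of the number of leaves of each $g_j$, and rewrite each such count via the induction hypothesis as $1+\sum_{a}\Delta(g_j,a)$. Summing over the $n+1$ children contributes a leading $n+1$, giving $(n+1)+\sum_{j}\sum_{a}\Delta(g_j,a)$ leaves. On the other side, summing $\Delta(g,a)$ over all players picks up $\sum_{j}\sum_{a}\Delta(g_j,a)$ plus the single $+n$ that the owner's clause adds, so $1+\sum_{a}\Delta(g,a)=(n+1)+\sum_{j}\sum_{a}\Delta(g_j,a)$, matching the leaf count.

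The arguments are genuinely routine; the only point requiring care — the mild obstacle — is the outcome-partition bookkeeping in the owner case of part~\ref{obs:sgdo1}, where one must justify interchanging $\sum_{o\in O}$ with the counting of dismissed children and invoke the fact that $v$ assigns a single outcome to each subprofile. If $O$ is infinite this interchange is still valid, since all but finitely many of the cardinalities are zero. Everything else is elementary manipulation driven by the recursive clauses of Definition~\ref{defn:sgdo}.
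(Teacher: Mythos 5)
Your proposal is correct and follows essentially the same route as the paper: structural induction on the tree, splitting on whether the root owner is the player in question, with the key bookkeeping step being that the sets $\{j\neq i\mid v(s_j)=o\}$ partition the $n$ dismissed children so that $\sum_{o\in O}|\{j\neq i\mid v(s_j)=o\}|=n$, and for the leaf count that the $n+1$ children each contribute a $1$ while the owner's clause contributes the matching $+n$. Your explicit remark that the interchange of sums remains valid for infinite $O$ (all but finitely many terms vanish) is a small point of care the paper leaves implicit, but it is not a different argument.
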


\begin{proof}
\begin{enumerate}
\item By induction on $s$. If $s$ is a leaf profile, the claim holds since $\Delta(s2g(s),a)=0=\delta(s,a,o)$ by definition, so now let $s$ be a profile where the root owner $a$ chooses $s_i$ among subprofiles $s_0,\dots,s_n$. For $b\neq a$ Definition~\ref{defn:sgdo} and the induction hypothesis yield $\Delta(s2g(s),b)=\sum_{j=0}^n\Delta(s2g(s_j),b)\stackrel{I.H.}{=}\sum_{j=0}^n\sum_{0\in O}\delta(s_j,b,o)=\sum_{0\in O}\sum_{j=0}^n\delta(s_j,b,o)=\sum_{0\in O}\delta(s,b,o)$. Similarly we have $\Delta(s2g(s),a) = \sum_{j=0}^n \Delta(s2g(s_j),a)+n\stackrel{I.H.}{=}\sum_{j=0}^n\sum_{o\in O}\delta(s_j,a,o)+|\{j\in\{0,\dots,n\}-\{i\}\,\mid\,v(s_j)\in O\}| =\\\sum_{o\in O} \big(\sum_{j=0}^n \delta(s_j,a,o)+|\{j\in\{0,\dots,n\}-\{i\}\,\mid\,v(s_j)=o\}|\big)=\sum_{o\in O}\delta(s,a,o)$.

\item By induction on $g$. This holds for every leaf game $g$ since $\Delta(g,a)=0$ by definition. Let $g$ be a game whose root is owned by Player $a$ and whose subgames are $g_0,\dots,g_n$. The number of leaves in $g$ is the sum of the numbers of leaves in the $g_j$, that is, $\sum_{j=0}^{n}\big(1+\sum_{b\in A}\Delta(g_j,b)\big)$ by induction hypothesis. This, equals $1+\sum_{j=0}^{n}\sum_{b\in A-\{a\}}\Delta(g_j,b)+ n + \sum_{j=0}^{n}\Delta(g_j,a)$, which, in turn, equals $1+\sum_{b\in A-\{a\}}\Delta(g,b)+\Delta(g,a)$ by definition.
\end{enumerate}
\end{proof}

Lemma~\ref{lem:lazy-same} below states conservation of the outcomes that are avoided by a player in a profile during a lazy convertibility step of another player. Intuitively, it is because a lazy convertibility step of a player cannot modify the subtrees that are avoided by the other players, even though she owns node therein. In the lemma and after $\delta(s,b)$ denotes $o \mapsto \delta(s,b,o)$

\begin{lemma}\label{lem:lazy-same}
$s\stackrel{c}{\rightharpoonup}_as'\,\wedge\,b\neq a\quad\Rightarrow\quad\delta(s,b)=\delta(s',b)$
\end{lemma}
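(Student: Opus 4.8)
The plan is to prove the identity by induction on the inductive definition of lazy convertibility, which parallels the tree structure of the profiles; equivalently one may induct on the profile $s$. In the base case $s$ is a leaf profile, so $s'=s$ and both $\delta(s,b)$ and $\delta(s',b)$ are identically $0$ by Definition~\ref{defn:sgdo}, and there is nothing to prove.

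For the inductive step, suppose $s\stackrel{c}{\rightharpoonup}_a s'$ with $s$ a proper tree. Let $c$ denote the root owner (a fresh letter, to avoid clashing with the $a,b$ fixed in the statement). Unfolding Definition~\ref{defn:lazy-conv}, write the children of $s$ and $s'$ as $s_0,\dots,s_n$ and $s'_0,\dots,s'_n$; player $c$ chooses $s_i$ in $s$ and $s'_k$ in $s'$; we have $s_j=s'_j$ for all $j\neq k$; the chosen subtree satisfies $s_k\stackrel{c}{\rightharpoonup}_a s'_k$; and the laziness condition gives $a=c$ or $i=k$. I would then fix an outcome $o\in O$ and compute $\delta(s,b,o)$ and $\delta(s',b,o)$ via Definition~\ref{defn:sgdo}, distinguishing whether $b$ is the root owner.

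If $b\neq c$, then neither $\delta(s,b,o)$ nor $\delta(s',b,o)$ carries a boundary count, so each is merely the sum of the children's contributions $\sum_j\delta(s_j,b,o)$ and $\sum_j\delta(s'_j,b,o)$. The summands for $j\neq k$ agree because $s_j=s'_j$, and the summand for $j=k$ agrees by the induction hypothesis applied to $s_k\stackrel{c}{\rightharpoonup}_a s'_k$ (still with $b\neq a$); hence the two sums coincide.

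The main obstacle---and the only place the laziness hypothesis is used---is the case $b=c$, where $b$ owns the root. Here the hypothesis $b\neq a$ gives $a\neq c$, so the disjunction ``$a=c$ or $i=k$'' forces $i=k$: player $a$'s lazy conversion leaves the root owner's choice untouched. Consequently the boundary counts $|\{j\neq i\,\mid\,v(s_j)=o\}|$ and $|\{j\neq k\,\mid\,v(s'_j)=o\}|$ range over the same index set and, since $s_j=s'_j$ for every $j\neq k=i$, count the same outcomes; the child sums agree exactly as in the previous case. This yields $\delta(s,b,o)=\delta(s',b,o)$, and since $o$ was arbitrary, $\delta(s,b)=\delta(s',b)$, completing the induction. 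Intuitively, this is precisely the statement that a lazy move by $a$ can never alter the choice another player $b$ makes at a node $b$ owns, nor---recursively---anything $b$ dismisses below it.
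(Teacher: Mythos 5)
Your proof is correct and takes essentially the same route as the paper's: induction on the profile, a case split on whether $b$ owns the root, and in that case using $b\neq a$ together with the laziness disjunction to force $i=k$, so the boundary count and the child sums are unchanged. The only cosmetic difference is that the paper applies the induction hypothesis to every child (invoking $s_j\stackrel{c}{\rightharpoonup}_a s'_j$ for all $j$, which holds by reflexivity for the unchanged ones), whereas you apply it only to the $k$-th child and use the equality $s_j=s'_j$ directly for the others.
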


\begin{proof}
By induction on the profile. It holds for leaves, so let $s\stackrel{c}{\rightharpoonup}_as'$ with subprofiles $s_0,\dots,s_n$ and $s'_0,\dots,s'_n$, respectively. By definition of $\stackrel{c}{\rightharpoonup}_a$ we have $s_j\stackrel{c}{\rightharpoonup}_as'_j$ for all $j$, and therefore $\delta(s_j,b,o)=\delta(s'_j,b,o)$ by induction hypothesis. If the root owner is different from $b$, then $\delta(s,b,o)=\sum_{j=0}^n\delta(s_j,b,o)=\sum_{j=0}^n\delta(s'_j,b,o)=\delta(s',b,o)$ by definition of $\delta$. If $b$ is the root owner, she chooses the $i$-th subprofile in both $s$ and $s'$ since $b\neq a$, and moreover $s'_j=s_j$ for all $j$ distinct from $i$. So $\delta(s,b,o)=\sum_{j=0}^n\delta(s_j,b,o)+|\{j\in\{0,\dots,n\}-\{i\}\,\mid\,v(s_j)=o\}|=\sum_{j=0}^n\delta(s'_j,b,o)+|\{j\in\{0,\dots,n\}-\{i\}\,\mid\,v(s'_j)=o\}|=\delta(s',b,o)$.
\end{proof}

However, the conservation does not fully hold for the player who converts the profile, unless the induced outcomes are the same for both profiles. The difference is little though, only depending on both induced outcomes. In Lemma~\ref{lem:lazy-diff} below, $ eq$ is just a boolean representation of equality: $ eq(x,x):=1$ and $ eq(x,y):=0$ for $x\neq y$.

\begin{lemma}\label{lem:lazy-diff}
$s\stackrel{c}{\rightharpoonup}_as'\quad\Rightarrow\quad\delta(s,a)+ eq(v(s))=\delta(s',a)+ eq(v(s'))$
\end{lemma}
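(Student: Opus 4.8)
The plan is to strengthen the statement to a pointwise identity and prove it by structural induction. Writing $\delta(s,a)$ and $eq(v(s))$ as the functions $o\mapsto\delta(s,a,o)$ and $o\mapsto eq(v(s),o)$ (as the currying convention allows), it suffices to show, for every fixed outcome $o\in O$, that $\delta(s,a,o)+eq(v(s),o)=\delta(s',a,o)+eq(v(s'),o)$, by induction on the profile $s$ following the inductive definition of $\stackrel{c}{\rightharpoonup}_a$ in Definition~\ref{defn:lazy-conv}. For a leaf profile the only lazy conversion is $s\stackrel{c}{\rightharpoonup}_a s$, and since $\delta(s,a,o)=0$ and $v(s)=v(s')$ the two sides are literally equal, which settles the base case.

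For the inductive step I would unfold Definition~\ref{defn:lazy-conv}: let $s_0,\dots,s_n$ and $s'_0,\dots,s'_n$ be the children, let the root owner $r$ choose $s_i$ in $s$ and $s'_k$ in $s'$, so that $v(s)=v(s_i)$ and $v(s')=v(s'_k)$, and recall that then $s_k\stackrel{c}{\rightharpoonup}_a s'_k$, that $s_j=s'_j$ for $j\neq k$, and that either $r=a$ or $i=k$. The easy case is $r\neq a$, where the disjunction forces $i=k$: here $\delta(\cdot,a,o)$ at the root is simply the sum over children, all terms with $j\neq k$ cancel since $s_j=s'_j$, and the leftover $\delta(s_k,a,o)-\delta(s'_k,a,o)$ is turned by the induction hypothesis into $eq(v(s'_k),o)-eq(v(s_k),o)=eq(v(s'),o)-eq(v(s),o)$, which is exactly the desired identity after rearranging.

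The real work is the case $r=a$, where the chosen index may move ($i\neq k$) and the root additionally contributes the combinatorial term $|\{j\neq i\mid v(s_j)=o\}|$ to $\delta(s,a,o)$ (respectively $|\{j\neq k\mid v(s'_j)=o\}|$ to $\delta(s',a,o)$). The trick I would use is to rewrite each dismissed-children count as the full count $|\{j\mid v(s_j)=o\}|$ minus $eq(v(s_i),o)$, and to note that the two multisets $(v(s_j))_j$ and $(v(s'_j))_j$ agree outside position $k$, so that their full counts differ by exactly $eq(v(s_k),o)-eq(v(s'_k),o)$. Feeding in the induction hypothesis on $s_k\stackrel{c}{\rightharpoonup}_a s'_k$ for the recursive part, the two occurrences of $eq(v(s_k),o)$ cancel against each other, the $eq(v(s'_k),o)$ contributions reduce to a single copy, and the whole difference collapses to $eq(v(s'_k),o)-eq(v(s_i),o)=eq(v(s'),o)-eq(v(s),o)$, as required. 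I expect this bookkeeping at the root --- simultaneously accounting for the single changed subprofile $s_k\mapsto s'_k$ and for the possibly moving chosen index $i\mapsto k$ --- to be the only delicate point; the global consistency of the identity is in any case reassured by the first item of Observation~\ref{obs:sgdo}, since summing over all $o$ and using that $s2g$ is invariant under convertibility makes both sides equal $\Delta(s2g(s),a)+1$.
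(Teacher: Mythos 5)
Your proposal is correct and follows essentially the same route as the paper's proof: structural induction on the profile, a case split on whether the converting player owns the root, and in the root-owner case the same key identity $|\{j\neq x\mid v(s_j)=o\}|+eq(v(s_x),o)=|\{j\mid v(s_j)=o\}|$ (applied at the old and new chosen indices) combined with the induction hypothesis on $s_k\stackrel{c}{\rightharpoonup}_as'_k$. The only difference is presentational --- you cancel terms in the difference of the two sides, whereas the paper rewrites one side step by step --- so the two arguments coincide.
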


\begin{proof}
By induction on the profile $s$. It holds for leaves, so let $s\stackrel{c}{\rightharpoonup}_as'$ with subprofiles $s_0,\dots,s_n$ and $s'_0,\dots,s'_n$, respectively. If the root owner is distinct from $a$, she chooses the same $i$-th subprofile in both $s$ and $s'$, therefore $\delta(s,a,o)+ eq(v(s),o)=\sum_{0\leq j\leq n\,\wedge\,j\neq i}\delta(s_j,a,o)+\delta(s_i,a,o)+ eq(v(s_i),o)=\sum_{0\leq j\leq n\,\wedge\,j\neq i}\delta(s'_j,a,o) + \delta(s'_i,a,o) + eq(v(s'_i),o)=\delta(s',a,o)+ eq(v(s'),o)$ by definition of $\delta$, since $s_j=s'_j$ for $j\neq i$, and by induction hypothesis.

If $a$ is the root owner, let $a$ choose the $i$-th and $k$-th subprofiles in $s$ and $s'$, respectively. Let $N:=\delta(s,a,o)+ eq(v(s),o)$, so $N=\sum_{0\leq j\leq n\,\wedge\,j\neq k}^n\delta(s'_j,a,o)+|\{j\in\{0,\dots,n\}-\{i\}\,\mid\,v(s_j)=o\}|+\delta(s_k,a,o)+ eq(v(s_i),o)$ by unfolding Definition~\ref{defn:sgdo}, since $s'_j=s_j$ for all $j\neq k$, and since $v(s)=v(s_i)$ by the choice at the root. Rewriting $N$ twice with the easy-to-check equality $|\{j\in\{0,\dots,n\}-\{x\}\,\mid\,v(s_j)=o\}|+ eq(v(s_x),o)=|\{j\in\{0,\dots,n\}\,\mid\,v(s_j)=o\}|$, first with $x:=i$ and then with $x:=k$ yields the equality $N=\sum_{0\leq j\leq n\,\wedge\,j\neq k}^n\delta(s'_j,a,o)+|\{j\in\{0,\dots,n\}-\{k\}\,\mid\,v(s_j)=o\}|+\delta(s_k,a,o)+ eq(v(s_k),o)$. Since $s_k\stackrel{c}{\rightharpoonup}_as'_k$ by definition of lazy convertibility, and by the induction hypothesis, let us further rewrite $\delta(s_k,a)+ eq(v(s_k))$ with $\delta(s'_k,a)+ eq(v(s'_k))$ in $N$. Folding Definition~\ref{defn:sgdo} yields $N=\delta(s',a,o)+ eq(v(s'),o)$.
\end{proof}

The two lemmas above suggest that whenever a player lazily converts a profile to obtain a better outcome, some measure decreases a bit with respect to her preference, but does not change for the other players. The lazy improvement should therefore terminate, and even quite quickly, as proved below. Recall that a finite preference relation $\prec $ has height at most $h$ if there is no chain $s_1 \prec s_2 \prec \ldots \prec s_{h+1}$.

\begin{theorem}[Strengthening Theorem~\ref{thm:lazy-term} with bounds]\label{thm:lazy-term2}
Consider a game $g$ where Player $a$ has an acyclic preference of height $h$. Let $\Delta(g,a)$ be the total number of choices available to Player $a$, minus the number of vertices where $a$ is choosing. Then in any sequence (possibly infinite) of lazy improvement, the number of lazy improvement steps performed by Player $a$ is bounded by $(h-1)\cdot\Delta(g,a)$.

\begin{proof}
For every outcome $o$ let $h(a,o)$ be the maximal cardinality of the $\prec_a$-chains whose $\prec_a$-maximum is $o$, and note that $o\prec_ao'$ implies $h(a,o)<h(a,o')$. For every profile $s$ let $M(s,a):=\sum_{o\in O}(h(a,o)-1)\cdot\delta(s,a,o)$ and note that $0\leq M(s,a)\leq (h-1)\cdot \Delta(g,a)$ by Observation~\ref{obs:sgdo}.\ref{obs:sgdo1}. Let $s\rightharpoonup_as'$ be a lazy improvement step, so $s\stackrel{c}{\rightharpoonup}_as'$ and $v(s)\prec_av(s')$ by definition, then $M(s,a)-M(s',a)=\sum_{o\in O}(h(a,o)-1)\cdot(\delta(s,a,o)-\delta(s',a,o))=h(a,v(s'))-h(a,v(s))>0$ by Lemma~\ref{lem:lazy-diff}. Let $s\stackrel{c}{\rightharpoonup_b}s'$ be a lazy convertibility step where $b\neq a$, then $M(s,a)=M(s',a)$ by Lemma~\ref{lem:lazy-same}. This shows that the $\rightharpoonup_a$ steps are at most $(h-1)\cdot\Delta(g,a)$ in every sequence of $\rightharpoonup$.
\end{proof}
\end{theorem}

\begin{corollary}[Strengthen Corollary~\ref{cor:all-acycl} with bounds]\label{cor:all-acycl2}
The lazy improvement terminates for all games iff all preferences are acyclic, in which case the number of sequential lazy improvement steps is at most $(h-1)\cdot (l-1)$ where $h$ bounds the cardinality of the preference chains and $l$ is the number of leaves.
\end{corollary}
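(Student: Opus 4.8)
The plan is to handle the two halves of the statement separately. The equivalence between termination for all games and acyclicity of all preferences is exactly Corollary~\ref{cor:all-acycl}, so nothing new is needed there; I would simply invoke it. The remaining work is the quantitative bound, which I expect to follow by summing the per-player bound of Theorem~\ref{thm:lazy-term2} over all players and then applying the leaf-counting identity of Observation~\ref{obs:sgdo}.\ref{obs:sgdo2}.

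First I would fix a game $g$, assume all preferences acyclic, and let $h$ be a uniform bound on the cardinality of every preference chain, taken over all players. For a single Player $a$ whose chains have at most $h_a\leq h$ elements, Theorem~\ref{thm:lazy-term2} bounds the number of steps that $a$ performs in any collective sequence by $(h_a-1)\cdot\Delta(g,a)\leq (h-1)\cdot\Delta(g,a)$.

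Next I would observe that every step of the collective lazy improvement $\rightharpoonup\,=\,\cup_{a\in A}\rightharpoonup_a$ is attributed to exactly one player, so the total number of steps in any sequence is the sum over $a$ of the steps performed by $a$. Summing the per-player bounds gives a total bounded by $(h-1)\cdot\sum_{a\in A}\Delta(g,a)$. Finally, Observation~\ref{obs:sgdo}.\ref{obs:sgdo2} states $1+\sum_{a\in A}\Delta(g,a)=l$, hence $\sum_{a\in A}\Delta(g,a)=l-1$, which yields the claimed bound $(h-1)\cdot(l-1)$.

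I do not anticipate a genuine obstacle here, since the corollary is essentially an accounting consequence of results already proved. The only points requiring care are that $h$ must bound the chains of \emph{every} player uniformly, so that each factor $(h_a-1)$ is dominated by $(h-1)$, and that the collective step-count decomposes cleanly as a sum of individual counts because the reductions $\rightharpoonup_a$ partition the steps, each rewriting being credited to a single player.
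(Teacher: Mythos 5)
Your proposal is correct and takes essentially the same route as the paper, whose entire justification is the remark that the corollary rephrases Corollary~\ref{cor:all-acycl} and gets the bound from Theorem~\ref{thm:lazy-term2} together with Observation~\ref{obs:sgdo}, i.e., exactly your accounting: sum the per-player bounds $(h-1)\cdot\Delta(g,a)$ over all players and use $1+\sum_{a\in A}\Delta(g,a)=l$. The only discrepancy is that the paper's lead-in cites Observation~\ref{obs:sgdo}.\ref{obs:sgdo1}, whereas the identity actually needed is item~\ref{obs:sgdo2}, precisely the one you invoke --- an apparent typo in the paper rather than a gap in your argument.
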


\begin{observation}\label{obs:quadratic}
\begin{enumerate}
\item The maximal length of a lazy improvement sequence is bounded in a quadratic manner in the size of the game in general and linearly when $h$ from Corollary~\ref{cor:all-acycl2} is fixed.
\item\label{rem-quad2} The quadratic and linear bounds are tight.
\end{enumerate}
\end{observation}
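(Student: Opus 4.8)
The first item is a reading of Corollary~\ref{cor:all-acycl2}, and the plan is simply to bound $h$ by $l$. Any preference chain realised in a given game is made of pairwise distinct outcomes, each occurring at some leaf, so $h\le l$; and $l$ is itself at most the size (number of nodes) of the game. Hence the bound $(h-1)(l-1)\le (l-1)^2$ is quadratic in the size in general. When $h$ is held fixed, $(h-1)(l-1)$ is affine in $l$ and therefore linear in the size. No further work is needed for this item.

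For the second item I would exhibit a family $(G_m)_{m\ge 1}$ of two-player games of size $\Theta(m)$ carrying a lazy-improvement sequence of length $\Theta(m^2)$, thereby matching the upper bound up to a constant factor. In $G_m$ the root belongs to Player $a$ and has $m$ children $B_1,\dots,B_m$; each $B_i$ belongs to Player $b$ and carries two leaves with distinct outcomes $H_i$ and $L_i$. I would set $L_i\prec_aH_1\prec_aH_2\prec_a\dots\prec_aH_m$ for every $i$, with the $L_i$ pairwise incomparable, and $H_i\prec_bL_i$ for every $i$ and nothing else. Both preferences are acyclic, so Corollary~\ref{cor:all-acycl2} guarantees termination; moreover $l=2m$ while the longest $\prec_a$-chain has $h=m+1$ elements, so $h=\Theta(l)$ and $(h-1)(l-1)=\Theta(m^2)$.

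The core is to schedule the steps. Starting from the profile where $a$ plays $B_1$ and each $B_i$ plays $H_i$, I alternate \emph{climbs} and \emph{resets}. A climb is a run of $a$-steps that switch the root from the current high leaf to the next one up the chain $H_1\prec_aH_2\prec_a\dots$; each such switch is a legal lazy improvement because the only $a$-node is the root, which lies on the new induced play (Observation~\ref{obs:lazy}.\ref{obs:lazy2}). When $a$ reaches the highest still-available leaf $H_k$, Player $b$ resets it: since $B_k$ is on the play and $H_k\prec_bL_k$, the lone lazy step switching $B_k$ from $H_k$ to $L_k$ improves $b$ and drops the induced outcome to the bottom $L_k$, forcing $a$ to climb anew but now only up to $H_{k-1}$. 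Running this for $k=m,m-1,\dots,1$ yields climbs whose lengths sum to $M(s_0,a)=\sum_{i=2}^m i=\Theta(m^2)$; by Theorem~\ref{thm:lazy-term2} every $a$-step lowers the potential $M(\cdot,a)$ by exactly one and, by Lemma~\ref{lem:lazy-same}, each reset leaves $M(\cdot,a)$ unchanged, which both certifies optimality of the schedule and gives an independent check of the count.

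The main obstacle is verification rather than insight: one must confirm that at each of the $\Theta(m^2)$ scheduled moves the intended improvement is genuinely enabled, i.e. that the targeted $H_k$ is still present (it is, because $b$ resets leaves in the single order $B_m,\dots,B_1$ and never revisits one) and that nothing forces us off the long sequence (we need only one witnessing sequence and are free to pick which enabled improvement to perform). Finally I would record that the same $G_m$ shows backward induction to be faster: a single bottom-up pass makes each $B_i$ yield $L_i$ and lets the root choose a $\prec_a$-maximal $L_i$ under any linear extension, running in time $\Theta(l)$ against the $\Theta(l^2)$ of the lazy process, exactly as announced in the introduction.
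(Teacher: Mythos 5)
Your proposal is correct and takes essentially the same approach as the paper: the paper's witness is the same fan-shaped game (an $a$-owned root over $b$-owned nodes, each choosing between a chain element $x_i$ and a bottom outcome preferred by $b$), and its schedule is the same climb/reset pattern, counted recursively by induction to give exactly $\frac{(n+2)(n+3)}{2}-2$ steps rather than via the potential $M(\cdot,a)$ of Theorem~\ref{thm:lazy-term2} as you do. The only cosmetic differences are that the paper reuses a single shared bottom outcome $y$ where you use pairwise incomparable outcomes $L_i$, and that the paper proves only item 2, leaving item 1 to follow from Corollary~\ref{cor:all-acycl2} exactly as you argue.
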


\begin{proof}(of \ref{obs:quadratic}.\ref{rem-quad2}.)
For the linear bound, let us consider the figure below and set $x := x_0 = \dots = x_n$ and $y \prec_a x$ and $x \prec_b y$. There is clearly a lazy improvement sequence starting from the figure and visiting each leaf exactly once.

\begin{tikzpicture}[level distance=7mm]
\node{a}[sibling distance=15mm]
	child{node{b}[sibling distance=8mm] edge from parent[double]
		child{node{$x_0$} edge from parent[double]}
		child{node{$y$}}
	}
	child{node{b}[sibling distance=8mm]
		child{node{$x_1$} edge from parent[double]}
		child{node{$y$}}
	}
	child{node{\dots}}
	child{node{b}[sibling distance=8mm]
		child{node{$x_n$} edge from parent[double]}
		child{node{$y$}}
	};
\end{tikzpicture}

It is similar for the quadratic bound, but we need to be a bit more careful. For $n\in\mathbb{N}$, consider the game in the above figure, where $y\prec_ax_0\prec_ax_1\dots\prec_ax_n$ and $x_i\prec_by$ for all $i$. Let us prove by induction on $n$ the existence of a sequence of $\frac{(n+2)(n+3)}{2}-2$ lazy improvement steps when starting from the strategy profile above. For the base case $n=0$, there are $1=\frac{(0+2)(0+3)}{2}-2$ lazy improvement steps. For the inductive case, let Player $a$ make $n$ lazy improvements in a row, by choosing $x_1$, then $x_2$, and so on until $x_n$. At that point, let Player $b$ improve from $x_n$ to $y$ and then let Player $a$ come back to $x_0$. So far, $n+2$ lazy improvement steps have been performed. Now let us ignore the substrategy profile involving $x_n$ (and $y$). By induction hypothesis, $\frac{(n+1)(n+2)}{2}-2$ additional lazy improvement steps can be performed in a row. Since $(n+2)+\frac{(n+1)(n+2)}{2}-2=\frac{(n+2)(n+3)}{2}-2$, we are done.
\end{proof}

\subsection{Lazy non-worsening}
\label{subsec:nonsetback}

In this section the outcomes are real-valued payoff tuples. In this case, Theorem~\ref{thm:lazy-term2} can be slightly generalized in a way that will prove useful for studying lazy improvement in infinite games in Subsection \ref{subsec:lipschitz}. Let a lazy non-worsening step be a lazy convertibility step that does not decrease the payoff of the converting player. Said otherwise, a lazy non-worsening step is either a lazy improvement step or a lazy convertibility step preserving the payoff of the converting player. As shown below, weakening the first (but not the second) "lazy improvement" in Theorem~\ref{thm:lazy-term2} into a "lazy non-worsening" still yields a correct statement.

\begin{definition}[Non-worsening]
Let $f_a$ be the payoff function of Player $a$, and let $s \rightsquigarrow_a s'$ if $f_a(s) = f_a(s') \,\wedge\, s\stackrel{c}{\rightharpoonup}_a s'$. Let $\rightsquigarrow := \cup_{a\in A}\rightsquigarrow_a$ be the lazy preservation and let $\rightharpoonup \cup \rightsquigarrow$ be the lazy non-worsening.
\end{definition}

\begin{theorem}[Strengthen Theorem~\ref{thm:lazy-term}]\label{thm:lazy-term3}
Consider a game $g$ with real-valued payoffs, where Player $a$ has at most $h$ different payoffs. Then in every sequence (possibly infinite) of lazy non-worsening, the number of lazy improvement steps performed by Player $a$ is bounded by $(h-1)\cdot\Delta(g,a)$.
\end{theorem}

\begin{proof}
The proof is similar to that of Theorem~\ref{thm:lazy-term2}. We modify $\delta$ from Definition~\ref{defn:sgdo} to take payoffs into account instead of outcomes/payoff tuples. A modification of Lemma~\ref{lem:lazy-diff} is then easily obtained. Now, the old $M(s,a)$ form the proof of Theorem \ref{thm:lazy-term2} is re-defined with the new $\delta$. We should additionally point out that a lazy preservation step preserves the new $M(s,a)$, by the new Lemma~\ref{lem:lazy-diff}. Thus, lazy preservation steps preserve $M$, and have therefore no impact on the termination argument for the lazy improvement steps.
\end{proof}

\begin{corollary}\label{cor:eventually-Nash}
Let us restrict the lazy non-worsening such that preservation steps may only occur if the current profile is a NE. Every infinite sequence of such a restricted non-worsening is eventually made only of NE.
\end{corollary}

Note that there is no bound on when at the latest the lazy improvement steps may occur in Corollary~\ref{cor:eventually-Nash}, as the following example shows:

\begin{example}
We consider a game with two players, $a$ and $b$, and two payoff tuples, $x$ and $y$ such that $y \prec_a x$ and $x \prec_b y$. The game tree and initial strategy profile are as follows:

{\centering
\begin{tikzpicture}[level distance=7mm]
\node{a}[sibling distance=15mm]
	child{node{x}[sibling distance=8mm] edge from parent[double]
	}
	child{node{x}[sibling distance=8mm]
	}
	child{node{b}[sibling distance=8mm]
		child{node{$x$} edge from parent[double]}
		child{node{$y$}}
	};
\end{tikzpicture}}

Player $a$ can alternate between his left-most and center choice as often as he wishes using lazy equilibrium preservation. He can also at any time change to his right-most choice. Then Player $b$ has the opportunity to a lazy improvement step by changing to $y$, whereupon Player $a$ can lazily improve by going back to the left-most or center choice to obtain the outcome $x$. After this happened, all remaining possible lazy non-worsening steps are Player $a$ alternating between the left-most and center choice.
\end{example}

Lemma~\ref{lem:fgef-acc-ne} below will be used with Theorem~\ref{thm:lazy-term3} to deal with lazy improvement in infinite games in Section~\ref{sec:infinite}.

\begin{lemma}\label{lem:fgef-acc-ne}
Let $(g_n)_{n\in \mathbb{N}}$ be a family of finite games in extensive form that differ only in payoffs and that converge towards some game $g$ when $n$ approaches infinity. Consider an infinite sequence of profiles $(s_n)_{n\in \mathbb{N}}$ such that $s_n (\rightharpoonup \cup \rightsquigarrow)s_{n+1}$ in $g_n$. Then there is some $k \in \mathbb{N}$ such that for all $n \geq k$ we have that $s_n (\rightharpoonup \cup \rightsquigarrow)s_{n+1}$ in $g$.

\begin{proof}
As lazy convertibility does not refer to the payoffs, we see that any lazy convertibility step in one of the $g_n$ is also a lazy convertibility step in $g$. We only need to argue about the improvement and preservation aspects.

Let $\delta$ be the minimum distance between different payoffs in $g$. We pick $k \in \mathbb{N}$ such that for all $n \geq k$ the difference between payoffs in $g_n$ and $g$ at the same leaf is less than $\frac{\delta}{4}$ for all leaves. In particular, we find that in $g_n$ for $n \geq k$ payoffs differing by less than $\frac{\delta}{2}$ correspond to identical payoffs in $g$, payoffs differing by at least $\frac{\delta}{2}$ correspond to different payoffs in $g$.

Any lazy improvement step in $g_n$ for $n \geq k$ that improves the payoff of the acting player by at least $\frac{\delta}{2}$ is a lazy improvement step in $g$. Any lazy improvement step in $g_n$ for $n \geq k$ that improves the payoff of the acting player by less than $\frac{\delta}{2}$ corresponds to a preservation step in $g$, and so do preservation steps in $g_n$.
\end{proof}
\end{lemma}

\section{Lazy convertibility as belief updating}
\label{sec:beliefs}
Let us discuss whether we should expect players to conform to lazy convertibility when playing a sequential game repeatedly. Observation \ref{obs:outcomes} tells us that in the short term, a player has no incentive to deviate from lazy convertibility: If she desires some outcome she can reach by some deviation from her current strategy, she can obtain this outcome by converting a strategy in a lazy way. There is a caveat, though, in that restricting convertibility to lazy convertibility changes the overall reachability structure, as the following example shows.

\begin{example}
The last profile of the three-step improvement relation below is a Nash equilibrium that cannot be reached from the first profile under lazy improvement.

{\small
\begin{tabular}{ccccc}
\begin{tikzpicture}[level distance=7mm]
\node{a}[sibling distance=16mm]
	child{node{b}[sibling distance=8mm]edge from parent[double]
		child{node{a}[sibling distance=8mm]
			child{node{$3,3$}}
			child{node{$0,0$}edge from parent[double]}
		}
		child{node{$0,0$}edge from parent[double]}
	}
	child{node{b}[sibling distance=8mm]
			child{node{$2,2$}}
			child{node{$1,1$}edge from parent[double]}
	};
\end{tikzpicture}
&
\begin{tikzpicture}[level distance=7mm]
\node{a}[sibling distance=16mm]
	child{node{b}[sibling distance=8mm]
		child{node{a}[sibling distance=8mm]
			child{node{$3,3$}edge from parent[double]}
			child{node{$0,0$}}
		}
		child{node{$0,0$}edge from parent[double]}
	}
	child{node{b}[sibling distance=8mm]edge from parent[double]
			child{node{$2,2$}}
			child{node{$1,1$}edge from parent[double]}
	};
\end{tikzpicture}
&
\begin{tikzpicture}[level distance=7mm]
\node{a}[sibling distance=16mm]
	child{node{b}[sibling distance=8mm]
		child{node{a}[sibling distance=8mm]edge from parent[double]
			child{node{$3,3$}edge from parent[double]}
			child{node{$0,0$}}
		}
		child{node{$0,0$}}
	}
	child{node{b}[sibling distance=8mm]edge from parent[double]
			child{node{$2,2$}edge from parent[double]}
			child{node{$1,1$}}
	};
\end{tikzpicture}
&
\begin{tikzpicture}[level distance=7mm]
\node{a}[sibling distance=16mm]
	child{node{b}[sibling distance=8mm]edge from parent[double]
		child{node{a}[sibling distance=8mm]edge from parent[double]
			child{node{$3,3$}edge from parent[double]}
			child{node{$0,0$}}
		}
		child{node{$0,0$}}
	}
	child{node{b}[sibling distance=8mm]
			child{node{$2,2$}edge from parent[double]}
			child{node{$1,1$}}
	};
\end{tikzpicture}
\end{tabular}
}
\end{example}

From the perspective of any given player, it however makes a lot of sense to assume that all other players are updating their own strategies only in a lazy way -- assuming that only relevant choices of the other players can be observed. The latter seems to be crucial in order to make the game truly sequential: If all players announced their entire strategy simultaneously, it would be a game in normal form after all.

To formalize this idea, let us fix a Player $a$ and consider the game from her perspective. She may consider the game as a two-player game played by her against all other players aggregated into a single Player $b$. She starts with some initial strategy $s_a^{(0)}$, and some prior assumption $s_b^{(0)}$ on the strategy of her opponent(s). She then updates her own strategy via lazy improvement to $s_a^{(1)}$. Then the game is actually played, and Player $a$ observes the actual moves (but not the strategy) of her opponents. As she only observes the moves along the path actually taken, it is consistent with her observations to assume that the aggregated opponent player lazily converted $s_b^{(0)}$ into some $s_b^{(1)}$. Then the Player $a$ again performs a lazy improvement step to $s_a^{(2)}$, plays the game, etc. Provided that the Player $a$ has acyclic preferences, Theorem \ref{thm:lazy-term} implies that her own strategy stabilizes to some strategy $s_a$ eventually.

This learning procedure is the deterministic counterpart to the rational learning proposed by \name{Kalai} and \name{Lehrer} \cite{fudenberg2}, and extended to define the \emph{self-confirming} equilibria by \name{Fudenberg} and \name{Levine} \cite{fudenberg2}\footnote{The notion has been corrected by \name{Kamada} later, but the difference is not present in the deterministic setting.}. \name{Wellman} and \name{Hu}'s conjectural equilibria \cite{wellman} are based on the same intuition underlying the learning procedure -- only actual actions are observed, not hypothetical ones, which are merely subject to conjecture.

Note that this procedure requires no assumptions on knowledge of rationality of other players or their payoff functions, not to speak of common knowledge. There is in general no reason to assume that the aggregated Player $b$ acts according to some acyclic preference (given that the different players making up $b$ may have partially antagonistic preferences). However, if each player has an acyclic preference and performs the same procedure as $a$ above, then each players actual strategy will stabilize. As any change in what a player assumes her aggregated opponents are playing has to be caused by either a change in her own, or someone else's strategy, this implies that also the believed strategy of the aggregated players $s_b$ will stabilize. Furthermore, all the strategy profiles constructed in this way induce the same play, and combining them as follows yields a Nash equilibrium:

\begin{proposition}
Let a set of players play a finite sequential game by converting their own strategies lazily based on beliefs about the other players strategies in order to maximize an acyclic preference relation. Then a Nash equilibrium can be obtained from the stable strategies they will settle to as follows: Along the common path chosen by their stable strategies, everyone follows their own strategy. In any subgame that is not reached, each player plays according to the beliefs held by the player controlling access to the subgame about their strategies.
\begin{proof}
At any vertex reached during the final play, the choice facing the current player is the same one she was anticipating due to her beliefs on her opponents strategies. As her choice is consistent with the stable choice made during the dynamical updating, she has no incentive to change.
\end{proof}
\end{proposition}

In comparison, the investigation of the epistemic foundations of Nash equilibria by \name{Aumann} and \name{Brandenburger} \cite{aumann5} identified mutual knowledge of rationality, knowledge of the game and (in case of more than two players) a common prior as the prerequisite for playing a Nash equilibrium. A subgame perfect equilibrium requires even stronger assumptions, namely well-aware players \cite{aumann4}.

\section{Lazy improvement in infinite games}
\label{sec:infinite}
Infinite games in extensive form with win/lose preferences are generalizations of Gale-Stewart games \cite{gale2}, and are of great relevance for logic. That any two-player game in extensive form with antagonistic preferences and a Borel winning set actually has a Nash equilibrium is a highly non-trivial result by \name{Martin} \cite{martin}. It was used by \name{Mertens} and \name{Neymann} \cite{mertens} to show that infinite games with finitely many players and bounded, Borel-measurable, non-necessarily antagonistic real-valued payoffs have $\epsilon$-Nash equilibria. It was generalized to infinitely many players and payoffs only bounded from above in \cite{leroux3}. Moreover, subgame-perfect equilibria do not always exist (cf.~\cite{paulyleroux2,solan}).

The definition of lazy improvement applies to infinite games in extensive form as well, and we can adapt the results on finite games to see that it still constitutes a semi-potential:

\begin{proposition}
Consider an infinite game in extensive form where each player (there might be infinitely many) has acyclic preferences. Then lazy improvement is a semi-potential.
\begin{proof}
As argued in Section \ref{sec:definition}, we only need to show that lazy improvement is acyclic. Assume the contrary, then there is some finite cycle $s_1 \rightharpoonup s_2 \rightharpoonup \ldots \rightharpoonup s_n \rightharpoonup s_1$. Any subtree of the game tree not reached by any strategy profile $s_i$ is irrelevant for the existence of the cycle, and could thus be pruned. Doing so yields a finitely branching game tree, with still a lazy improvement cycle.

Let $p_1,\ldots,p_n$ be the paths induced by the strategy profiles $s_1,\ldots,s_n$, and choose $k \in \mathbb{N}$ such that $p_i|_{\leq k} = p_j|_{\leq k} \Leftrightarrow p_i = p_j$. By choice of $k$, the path chosen inside any subgame rooted at depth $k$ remains unchanged throughout the improvement cycle. Thus, replacing any such subgame with a leaf carrying the outcome induced by this path has no impact on the improvement cycle. We have obtained a finite game in extensive form with the same preferences and a cycle built from lazy improvement step, contradicting Theorem \ref{thm:lazy-term}.
\end{proof}
\end{proposition}

Of course, in an infinite game acyclicity does not suffice to ensure termination or even convergence. In fact, \cite[Example 26]{paulyleroux2} (reproduced below as Example \ref{ex:twentysix}) shows that lazy improvement in infinite games will not always converge, and that even accumulation points do not have to be Nash equilibria. There are however several potential ways to extend the results on lazy improvement to infinite games in extensive form:

\begin{enumerate}
\item We can consider games where the preferences are expressed via continuous payoff functions. For some fixed $\varepsilon > 0$, we can then consider $\varepsilon$-lazy improvement (where only lazy convertibility is allowed, and improvement steps are only taken if the player can improve by more than $\varepsilon$). Then Theorems \ref{thm:lazy-term} and \ref{thm:lazy-term2} carry over, and as a counterpart to Corollary \ref{corr:nash-lazy-term} we find that the terminal profiles of $\varepsilon$-lazy improvement are precisely the $\varepsilon$-Nash equilibria. See Subsection \ref{subsec:epsilonimprovement}.
\item Again for continuous payoff functions, we can employ lazy improvement being done in a finitary way with increasing precision, and find that any accumulation point of particular subsequence is guaranteed to be a Nash equilibrium, see Subsection \ref{subsec:continuous}.
\item We define a notion of a \emph{fair} lazy improvement sequence in Subsection \ref{subsec:lipschitz}. We then generalize the measure employed in the proof of Theorem \ref{thm:lazy-term2} to infinite games with Lipschitz payoff functions for fair improvement sequences. Moreover, we prove that for continuous payoff functions, all accumulation points of a fair lazy improvement sequence with finitely many accumulation points are Nash equilibria.
\item Departing from the setting of continuous payoff functions, we can consider games where the players have win/lose objectives (i.e.~their preference relations have height $2$), and the winning sets are $\Delta^0_2$-sets. Then transfinite iteration of lazy improvement will reach a Nash equilibrium, see Subsection \ref{subsec:transfinite}.
\end{enumerate}

In the following we always assume that the game tree is the full infinite binary tree, hence the set of resulting plays is $\Cantor$. This space carries a natural topology induced by the metric $d(p,q) = 2^{-\min \{n \mid p(n) \neq p(q)\}}$ for $p \neq q$, and in particular is a compact zero-dimensional space. In the first three following subsections, we assume that the preferences of each player are given by payoff functions $f_a : \Cantor \to \mathbb{R}$, where $p \prec_a q$ iff $f_a(p) < f_a(q)$.
We can then speak about restrictions on the payoff functions such as being continuous or Lipschitz continuous.

\subsection{$\varepsilon$-lazy improvement}
\label{subsec:epsilonimprovement}
Consider preferences obtained from payoff functions. Then for every $\varepsilon > 0$, we can introduce $\varepsilon$-lazy improvement as the intersection of lazy convertibility and $\varepsilon$-improvement, where $\varepsilon$-improvement means considering only those improvement steps where the payoff for the player increases by more than $\varepsilon$. Consequently, an $\varepsilon$-Nash equilibrium is a strategy profile where no player can improve by more than $\varepsilon$.

\begin{observation}
The sinks of $\varepsilon$-lazy improvement are precisely the $\varepsilon$-Nash equilibria.
\end{observation}

\begin{proposition}
Let $(s_n)_{n \in \mathbb{N}}$ be a sequence of strategy profiles in an infinite binary game in extensive form with $s_n \stackrel{c}{\rightharpoonup} s_{n+1}$. Let Player $a$ have a preference induced by a continuous payoff function $f$, and assume that for some $\varepsilon > 0$, whenever $s_n \stackrel{c}{\rightharpoonup}_a s_{n+1}$, with induced plays $p_n$, $p_{n+1}$, then $f(p_{n+1}) > f(p_n) + \varepsilon$. Then $s_n \stackrel{c}{\rightharpoonup}_a s_{n+1}$ holds for only finitely many $n$.
\begin{proof}
We will essentially use a reduction to the case for finite trees, and invoke Theorem~\ref{thm:lazy-term}.

We consider the cover $(A_k:=\,]\frac{\epsilon(k-1)}{2}\,,\,\frac{\epsilon(k+1)}{2}[)_{k \in \mathbf{Z}}$ of $\mathbb{R}$. By continuity of $f$ and compactness of $\Cantor$, there is a bar, \textit{i.e.} a finite prefix-free family $(w_i \in \{0,1\}^*)_{i \leq N}$ such that $\Cantor = \bigcup_{i \leq N} w_i\Cantor$, such that for every $i$ there exists some $k_i$ with $f[w_i\Cantor] \subseteq A_{k_i}$. Now consider the tree $T$ with the $w_i$ as the leaves. Clearly any strategy profile $s_n$ restricts to some strategy profile $s'_n$ on $T$, and moreover, if $s_n \stackrel{c}{\rightharpoonup} s_{n+1}$, then $s'_n \stackrel{c}{\rightharpoonup} s'_{n+1}$.

In the finite game played on $T$, let Player $a$ have the preference $w_i \prec_a w_j$ iff $k_i < k_j$. Clearly, this is an acyclic preference. For every other Player $b$, we just use the full preference $w_i \prec_b w_j$ for every $i,j$. Whenever $s_n \stackrel{c}{\rightharpoonup}_a s_{n+1}$, then $s'_n$, $s'_{n+1}$ must induce some $w_i$, $w_j$ with $k_i < k_j$. Thus, we do not loose any convertibility steps performed by Player $a$, and have an instance of Theorem~\ref{thm:lazy-term} which implies that $a$ only converts finitely many times.
\end{proof}
\end{proposition}

\begin{corollary}
If there are finitely many players, each with continuous payoff function, performing $\varepsilon$-lazy improvement in an infinite binary game in extensive form, then the process terminates in finitely many steps.
\end{corollary}

\subsection{Deepening lazy improvement}
\label{subsec:continuous}
Let us now assume that all (countably many) Player $a$ have preferences derived from continuous payoff functions $f_a$. For each Player $a$ we can use $f_a$ to label every vertex $v$ in the game with some rational interval $I^a_v$ in a way\footnote{The idea behind this corresponds to the representation of real numbers in computable analysis \cite{weihrauchd}.} that the label of every vertex is a subset of its predecessor, and such that $\bigcap_{n \in \mathbb{N}} I_{p_{\leq n}}^a = \{f_a(p)\}$ , i.e.~the intersection of all labels along an infinite path is the singleton set containing the payoff for this path.

In the deepening lazy improvement dynamics, we start with some inspection depth $d$. The players consider the prefix of the game tree of depth $d$, where Player $a$ prefers some vertex $v$ (at depth $d$) to some vertex $u$ (also at depth $d$) if all points in $I_u^a$ are smaller than all points in $I_v^a$. Now any lazy improvement step in this finite game (on the tree cut at depth $d$) induces an improvement step in the infinite tree game. By Theorem~\ref{thm:lazy-term}, improvement in every such finite game terminates.

Once all players are stable at the current inspection depth, the inspection depth is incremented by one. The incrementing shall be counted as an updating step, where the strategy profile is not modified. Thus some infinite sequence of strategy profile always arises. We shall call the subsequence of the profiles right after the inspection depth is incremented the \emph{stable subsequence}.

Note that the choice of labeling system is not uniquely determined by the payoff function, and that the labeling in turn influences the lazy improvement dynamics. Moreover, note that while we are dealing with linear preferences only in the case of infinite games, we do make use of finite approximations that lack linear preferences -- yet we are guaranteed that every preference occurring in our finite approximations is acyclic, which is sufficient for Theorem~\ref{thm:lazy-term}. Finally, the dynamics do depend on the history -- however, only on the depth currently reached, not on any details.

\begin{observation}
The deepening lazy improvement dynamics are computable, i.e.~given an infinite binary game and an initial strategy profile, we can compute a sequence of strategy profiles arising from deepening lazy improvement, as well as the indices of the stable subsequence.
\end{observation}

\begin{theorem}
\label{theo:lazy}
The following properties are equivalent for a strategy profile $s$:
\begin{enumerate}
\item $s$ is a Nash equilibrium.
\item $s$ is a fixed point\footnote{Given the history dependence of the lazy improvement dynamics, a \emph{fixed point} is understood to be any starting point of a lazy improvement sequence resulting in a constant sequence, i.e.~no improvement step is found at any inspection depth.} for deepening lazy improvement.
\item $s$ is an accumulation point of the stable subsequence of some sequence obtained from deepening lazy improvement.
\end{enumerate}
\begin{proof}
\begin{description}
\item[$1. \Leftrightarrow 2.$] By continuity of the preferences, a player prefers a strategy profile $s$ to another profile $s'$, if and only if there is an inspection depth $d$ such that he prefers the restriction of $s$ to the restriction of $s'$ in the corresponding finite approximation. This in turn implies that a strategy profile is a Nash equilibrium of the infinite game, if and only if all its finite prefixes are Nash equilibria in the corresponding finite games. The same holds for fixed points by construction of the lazy improvement steps for infinite games. Thus, the claim for infinite games follows from the result for finite games, i.e.~Observation \ref{obs:ne-sink}.
\item[$2. \Rightarrow 3.$] If $s$ is a fixed point, then the lazy improvement sequence with starting point $s$ is constant, hence has $s$ as accumulation point.
\item[$3. \Rightarrow 2.$]
Let the strategy profile $s$ arise as an accumulation point of the stable subsequence of a sequence $(s_n)_{n \in \mathbb{N}}$ obtained by deepening lazy improvement, and assume that $s$ is not a fixed point. Then there is some minimal inspection depth $d$ necessary to find a lazy improvement step in $s$, which is executed by some Player $a$. The detection at inspection depth $d$ means that any strategy profile $s'$ sharing a finite prefix of depth $d$ with $s$ will admit exactly the same lazy improvement step.

The assumption that $s$ is an accumulation point of the stable subsequence in particular implies that infinitely many strategy profiles occur that share a prefix of length $d$ with $s$. In particular, there would have to be a strategy profile that shares a prefix of length $d$ with $s$, and that is stable at inspection depth $d' > d$. But, as explained above, Player $a$ would then wish to change his strategy, i.e.~we have arrived at a contradiction. Hence, $s$ has to be a fixed point.
\end{description}
\end{proof}
\end{theorem}

\subsection{Fair Lazy improvement}
\label{subsec:lipschitz}
The third approach is based on what we call fair lazy improvement, and it is closely related to the outcomes' being real-valued payoff tuples. An infinite improvement sequence is fair if the following holds: every player who could improve her payoff by more than some given value infinitely often also makes such an improvement infinitely often. This condition rules out two undesirable cases: First, a player keeps improving towards some lower payoff, while a larger payoff has been available all along; second, a player never gets the chance to improve at all, while she could improve significantly. The formal definition follows:

\begin{definition}[Fair improvement\footnote{This is \emph{fair} as in \emph{fair scheduler}, not as in \emph{fair division of cake}.}]
Consider a game with real-valued payoff functions $(f_a)_{a \in A}$. A lazy improvement sequence $(s_n)_{n \in \mathbb{N}}$ is fair if the following holds: for all positive real numbers $r$ and all players $a \in A$, if for all $n$ there are $m > n$ and a strategy profile $s'$ such that $s_{m} \rightharpoonup_a s'$ and $f_a(s_{m}) + r < f_a(s')$, then  $s_{n} \rightharpoonup_a s_{n + 1}$ and $f_a(s_{n}) + r < f_a(s_{n + 1})$ for infinitely many $n$.
\end{definition}

As of now we are unable to answer the following question.

\begin{open-question}
In an infinite binary game with continuous real-valued payoff functions, are all accumulation points of a fair lazy improvement sequence Nash equilibria?
\end{open-question}

We will show that the answer is positive in two special cases. First, if the sequence has only finitely many accumulation points, we can use the lazy non-worsening we introduced in Subsection \ref{subsec:nonsetback} together with a limit argument to establish the following.

\begin{theorem}
If a fair lazy improvement sequence $(s_n)_{n \in \mathbb{N}}$ in a binary game with continuous payoff functions has only finitely many accumulation points, then all of them are Nash equilibria.
\begin{proof}
As there are only finitely many accumulation points of $(s_n)_{n \in \mathbb{N}}$, there are only finitely many positions in the game tree where the current choice changes infinitely many times. Let $d_0 \in \mathbb{N}$ be large enough that no such position occurs below depth $d_0$ in the game tree. For any vertex $v$ below depth $d_0$, the sequence of payoffs induced by $s_n$ starting from $v$ will converge.

Assume for the sake of contradiction that $(s_n)_{n \in \mathbb{N}}$ has some accumulation point $s$ which is not a Nash equilibrium. By continuity of the payoff functions, there is some $d_1 \geq d_0$, a Player $a$ and some $\delta > 0$ such that infinitely many $s_n$ coincide with $s$ at least up to depth $d_1$, that any two paths agreeing up to depth $d_1$ grant Player $a$ payoffs differing by less than $\delta$, and moreover, that in any strategy profile coinciding with $s$ up to depth $d_1$, Player $a$ has a lazy improvement step of at least $\delta$ available.

Let $g_n$ be the finite game of depth $d_1$ where each leaf has the same payoff as the subgame starting at the corresponding vertex in the original game would yield using the strategy profile $s_n$, and let $s'_n$ be the corresponding truncation of $s_n$. Let $g$ be the limit of the $g_n$. We have either $s'_n = s'_{n+1}$, or $s'_n \rightharpoonup s'_{n+1}$ in $g_n$. We can safely remove duplicates from the sequence. By Lemma \ref{lem:fgef-acc-ne} we find that $s'_n \rightharpoonup \cup \rightsquigarrow s'_{n+1}$ in $g$, and then Theorem \ref{thm:lazy-term3} implies that each player (in particular Player $a$) makes only finitely many improvement steps in the sequence $(s'_n)_{n \in \mathbb{N}}$. Now any improvement step by Player $a$ in $(s_n)_{n \in \mathbb{N}}$ by more than $\delta$ corresponds to an improvement step in the $(s'_n)_{n \in \mathbb{N}}$, hence he makes only finitely many of those. This contradicts the fairness of $(s_n)_{n \in \mathbb{N}}$.
\end{proof}
\end{theorem}

Covering the case of only finitely many accumulation points does not suffice in general, as there can be uncountably many, as we shall proceed to show.

\begin{proposition}
Let $A \subseteq \Cantor$ be a non-empty closed set with empty interior. Then there is a one player game with a continuous payoff function, and a fair lazy improvement sequence $(s_n)_{n \in \mathbb{N}}$, such that $A$ is the set of runs induced by the accumulation points of $(s_n)_{n \in \mathbb{N}}$.
\begin{proof}
We will use the payoff function $p \mapsto (1 - d(p,A))$ for the player. As a closed subset of Cantor space, $A$ can be represented as the set of infinite path through some pruned tree $T_A \subseteq \{0,1\}^*$. As $A$ has empty interior, we know that for any $v \in T_A$ there is some extension $w \sqsupseteq v$ with $w \notin T_A$. By iteratively applying this to children, we find that for all $v \in T_A$ and $k > |v|$, there is some $w_{v,k} \sqsupseteq v$ with $w_{v,k} \notin T_A$, $|w_{v,k}| \geq k$, and the longest proper prefix of $w_{v,k}$ is in $T_A$.

Let $T_A = \{v_n \mid n \in \mathbb{N}\}$. We now construct a sequence of paths $(p_n)_{n \in \mathbb{N}}$ iteratively, together with an auxiliary sequence $(k_n)_{n \in \mathbb{N}}$ of integers. Let $k_{0} := 0$, and $p_{0}$ be some path extending $w_{v_0,0}$. Then let us always choose $k_{n+1}$ such that $d(p_n,A) > 2^{-k_{n+1}+1}$, and $p_{n+1}$ to be some path extending $w_{v_{n+1},k_{n+1}}$.

This construction ensures that $d(p_{m},A)$ converges monotonely to $0$. We derive a sequence $(s_m)_{m \in \mathbb{N}}$ of strategy profiles linked via lazy convertibility, such that $s_m$ induces $p_m$. Then $(s_m)_{m \in \mathbb{N}}$ is a fair lazy improvement sequence. It remains for us to argue that $A$ is the set of accumulation points of $(p_n)_{n \in \mathbb{N}}$. Some open ball $v\Cantor$ intersects $A$ iff $v \in T_A$. Since any such $v$ has infinitely many extensions $v'$ also in $T_A$, we see that there are infinitely many $p_n$ with prefix $v$. Thus, any $p \in A$ is an accumulation point of $(p_n)_{n \in \mathbb{N}}$. Moreover, since $\lim_{n \to \infty} d(p_n,A) = 0$, $(p_n)_{n \in \mathbb{N}}$ cannot have any accumulation points outside of $A$.
\end{proof}
\end{proposition}

\begin{corollary}
There are fair lazy improvement sequences with uncountably many accumulation points.
\end{corollary}

We can extend the argument based on a measure employed in the proof of Theorem \ref{thm:lazy-term2} to the infinite case, provided that the payoff functions satisfy a rather strong Lipschitz condition. This conditions is used to ensure that (a modification of) the measure is a finite quantity.

\begin{proposition}\label{prop:Lipschitz-fair-lazy}
If the game tree is binary and if for each player there exists $\eta > 2$ such that her payoff function is Lipschitz-continuous for the distance $d$ defined by $d(h0\rho,h1\rho') = \frac{1}{\eta^{|h|}}$, then all the accumulation points of a fair lazy improvement sequence are Nash equilibria.

\begin{proof}
To all strategy profiles $s$ and all players $a$ let us associate a real number:

 \[M_a(s) := \sum_{h\in d^{-1}(a)} f_a \big( h\cdot (1-s(h)) \cdot \rho(h\cdot (1-s(h)),s) \big) - \min_{\rho\in \{0,1\}^\omega}(f_a(h\rho))\]

\noindent where $d(h)$ is the player that plays at history $h$, and $f_a(\rho)$ is the payoff for Player $a$ and run $\rho$, and $s(h)$ is the choice in $\{0,1\}$ that is prescribed by $s$ at $h$, and $\rho(h,s)$ is the run induced by strategy profile $s$ from $h$ on. Similarly to the finite case $f_a( h\cdot (1-s(h)) \cdot \rho(h\cdot(1-s(h)),s))$ is the payoff that is avoided by $a$ at history $h$. Note that the summands of $M_a(s)$ are all non-negative by definition of the minimum, and that the sum converges absolutely: indeed, by assumption $|f_a(h0\rho)-f_a(h1\rho')| \leq \frac{L_a}{\eta^{|h|}}$ for some $L_a > 0$ and for all $h$, $\rho$, and $\rho'$, so $M_a(s) \leq \sum_{h\in \{0,1\}^*} \frac{L_a}{\eta^|h|} = L_a \sum_{l = 0}^{+\infty}(\frac{2}{\eta})^l = \frac{L_a}{1-\frac{2}{\eta}}$. Also, each $M_a$ is continuous.

Similarly to the finite case, it is easy to see that $M_a$ is left unchanged when another player performs a lazy convertibility step. Also, $M_a$ decreases by $\delta$ when Player $a$ performs a lazy convertibility step that improves her payoff by $\delta \in \mathbb{R}$: first prove the claim for convertibility step changing only one choice (at one node); then by induction the claim holds for finitely many changes; finally, the full claim holds by continuity of $M_a$. As $M_a$ is non-negative, it follows that for all $\delta > 0$, in all lazy improvement sequences, no player can infinitely often improve by more than $\delta$.

Assume that $(s_n)_{n \in \mathbb{N}}$ is a lazy improvements sequence,
and let $s$ be some accumulation point that is not an Nash equilibrium. So $s \rightharpoonup_a t$ for some profile $t$ and Player $a$. By continuity of the payoffs there are $\delta, \varepsilon > 0$ such that whenever $d(s,s') < \delta$, there is some $t'$ such that $s' \rightharpoonup_a t'$, and the payoff for $a$ in $t'$ exceeds her payoff in $s'$ by at least $\delta$.

Now if $(s_n)_{n \in \mathbb{N}}$ were fair, then $a$ would need to improve by at least $\delta$ infinitely often, contradicting our observation above.
\end{proof}
\end{proposition}

\subsection{Transfinite lazy improvement in the difference hierarchy}
\label{subsec:transfinite}
We start by formalizing what it means to do a transfinite number of improvement steps. The following definition generalises the notion of finite sequence or $\omega$-sequence induced by a binary relation to $\alpha$-sequence for some ordinal $\alpha$: at limit ordinals, following a valid sequence amounts to picking an "accumulation point".

\begin{definition}[ordinal sequence of a relation]
Let $\to$ be a binary relation on some topological space $S$, and let $\alpha$ be an ordinal number. An $\alpha$-sequence of $\to$ is a family $(s_\beta)_{\beta<\alpha}$ of elements in $S$ such that for all $\beta < \alpha$, if $\beta +1 < \alpha$ then $s_{\beta}\to s_{\beta +1}$, and if $\beta$ is a limit ordinal, then for every $\beta' < \beta$ and every neighborhood $U$ of $s_\beta$ there exists $\gamma\in]\beta',\beta[$ such that $s_{\gamma}\in U$.
\end{definition}

Lemma~\ref{lem:no-def-sink} below says that, given a binary relation over a compact set, the only reason why an ordinal sequence cannot be further extended is when a sink has been reached.

\begin{lemma}\label{lem:no-def-sink}
Let $(s_\beta)_{\beta<\alpha}$ be a countable ordinal sequence of $\to$ over a compact set $S$. If $(s_\beta)_{\beta \leq \alpha}$ is not a sequence of $\to$ for any $s_{\alpha}\in S$, then $\alpha = \alpha'+1$ for some $\alpha'$ and $s_{\alpha'}$ is a sink of $\to$.

\begin{proof}
Let $\alpha$ be a limit ordinal. Towards a contradiction let us assume that $(s_\beta)_{\beta<\alpha}$ is not extendable. So for all $s \in S$ there exist a neighborhood $U_s$ of $s$ and an ordinal $\beta_s < \alpha$ such that $s_{\gamma} \notin U_s$ for all $\gamma > \beta_s$. The $\{U_s\}_{s \in S}$ form an open cover of $S$, so by compactness of $S$ there exists a finite subcover $\{U_s\}_{s \in S'}$. Let $\gamma := \sup_{s \in S'}\beta_s +1$. So $\gamma < \alpha$ by finiteness of $S'$. Moreover $s_\gamma \notin U_s$ for all $s \in S'$, so $s_\gamma \notin \cup_{s \in S'}U_s \supseteq S$, contradiction.
\end{proof}
\end{lemma}

In this paper by countable we mean at most countable. We find that even in very simple games, we can have improvement sequences of any countable length.

\begin{proposition}
\label{prop:alphasequence}
For every countable ordinal $\alpha$ there exists a win-lose two-player game on a binary tree with open winning set for one player, and an $\alpha$-sequence of lazy improvement in the game.
\begin{proof}
By transfinite induction on $\alpha$. It holds for the case $\alpha = 0$ (take the empty set as winning set). For the inductive case let us make a further case disjunction: first case, $\alpha = \alpha'+1$ is not a limit ordinal. Let $(s_\beta)_{\beta < \alpha'}$ be an $\alpha'$-sequence on some game $g$ with open winning set. Let $X$ be the opponent of the player who wins according to $s_{\alpha'}$. Let us consider the supergame where $X$ chooses between playing in $g$ or winning directly. This leads to an $(\alpha'+1)$-sequence. Second case, $\alpha$ is a limit ordinal. Since it is countable, there exists a sequence $(\beta_i)_{i\in\mathbb{N}}$ such that $1 < \beta_i < \alpha$ for all $i$ and $\alpha = \sup_{i\in\mathbb{N}} \beta_i$. Since $\alpha$ is a limit ordinal, $\beta_i + 1 < \alpha$ for all $i$, so by induction hypothesis let $g_i$ be a game with open winning set for $a$, and that has a $\beta_i+1$-sequence with starting profile $s_{i}$. Since ignoring the first profile of the sequence does not change its order type, we can further assume that $s_{i}$ makes $a$ lose. Now let us define a supergame by giving Player $a$ the possibility to continue forever and lose, or stop at stage $i$ and play in $g_i$. The winning set of $a$ is a union of open sets and is therefore open. Let us build an $\alpha$-sequence as follows. Let us start with a profile where $s^{i}$ is the subprofile in $g_i$ for all $i$, and where $a$ chooses to play $g_0$ at the root of the supergame. Let the players change strategies in $g_0$ until $b$ wins for the last time in the $\beta_0 +1$-sequence, then let $a$ change games to $g_1$ and simultaneously perform the first change from $s_1$ in $g_1$. Then let the players change strategies in $g_1$ until $b$ wins for the last time in the $\beta_1+1$-sequence, and so on.
\end{proof}
\end{proposition}

Lemma~\ref{lem:lose-close} below uses the main proof technique in this section: from a putative uncountable ordinal sequence of lazy improvement, we can extract an uncountable factor (or substring) with more properties.

\begin{lemma}\label{lem:lose-close}
Let $g$ be a game on a binary tree, where some open set $X$ contains only worst runs for some Player $a$. If there exists an uncountable sequence of lazy improvement in $g$, it has an uncountable subsequence where improvements from Player $a$ do not involve runs in $X$.
\begin{proof}
Since there is an uncountable sequence of lazy improvement in $g$, there is a $\omega_1$-sequence, where $\omega_1$ is the first uncountable ordinal. Since there are only countably many vertices in the game, and since $X$ is open and non-empty, it can be written $\cup_{i\in\mathbb{N}}u_i\{0,1\}^\omega$ where all $u_i\in \{0,1\}^*$. If Player $a$ avoids some $u_i\{0,1\}^\omega$ at some point in the $\omega_1$-sequence, it avoids it for ever, since it is open and since it contains only worst possible runs. So Player $a$ escaping $X$ by an improvement step only occurs countably many times in the $\omega_1$-sequence. Let $\Gamma$ be the set of ordinals where such improvements occur. So, such improvements do not occur from $(\sup \Gamma) + 1$ (a countable ordinal) to $\omega_1$. This truncated sequence witnesses the claim.
\end{proof}
\end{lemma}

Lemma~\ref{lem:objectives-open-closed} below is the base case of the proof of Theorem~\ref{thm:lazy-dh}, which is proved by transfinite induction.

\begin{lemma}\label{lem:objectives-open-closed}
Let $g$ be a game with finitely many players who have Boolean (\textit{i.e.} win/lose) objectives. If every winning set is open or closed, every sequence of lazy improvement in $g$ is countable.
\begin{proof}
By induction on the number of outcome tuples occurring in the game. The claim holds for one tuple, so let us assume that at least two tuples occur in the game. Towards a contradiction, let us consider an uncountable sequence of lazy improvement in $g$. Let us assume that the losing set of some Player $a$ has non-empty interior $X$. By applying Lemma~\ref{lem:lose-close} there is an uncountable subsequence where improvements from Player $a$ do not involve runs in $X$. So the above sequence is still valid in the game derived from $g$ by moving $X$ from the losing set of Player $a$ to her winning set. Applying this to each player yields a game $g'$ where the losing sets are all closed with empty interiors and where there is a $\omega_1$-sequence of lazy improvement.

Let $A$ be the set of the players occurring in the game and for all $a\in A$ let $W_a$ be the winning set of $a$ in $g'$. Let $A'$ have maximal cardinality under the constraint $\cap_{a\in A'}W_a \neq \emptyset$, so all runs in $\cap_{a\in A'}W_a$ make all players in $A\backslash A'  \neq \emptyset$ lose. Since $\cap_{a\in A'}W_a$ is open and non-empty, $\{0,1\}^\omega \backslash W_a$ has non-empty interior for all $a \in A \backslash A'$, which implies that $A' = A$ to avoid a contradiction. So, the $\omega_1$-sequence of lazy improvement does not visit $\cap_{a\in A'}W_a$ (because nobody would want to leave it), which induces an uncountable sequence with fewer tuples and allows us to conclude by IH.
\end{proof}
\end{lemma}

Lemma~\ref{lem:bool-subgame} below will be useful during the transfinite induction step, when proving Theorem~\ref{thm:lazy-dh}.

\begin{lemma}\label{lem:bool-subgame}
Let $g$ be a game, let $a$ be a player with Boolean objectives, let $u$ be a node of the game, let $g_u$ be the subgame of $g$ rooted at $u$, and for all profiles $s$ in $g$ let $s_u$ be the corresponding profile in $g_u$. Consider a lazy improvement sequence in $g$. For all steps $s \rightharpoonup_a s'$ in the sequence (but possibly the first one entering $g_u$), either $s'_u = s_u$ or $s_u \rightharpoonup_a s'_u$ in $g_u$.

\begin{proof}
If the induced play does not reach $u$ after the improvement step $s \rightharpoonup_a s'$, then $s'_u = s_u$. So let us assume it reaches $u$ afterwards. If it also reaches $u$ before, $s_u \rightharpoonup_a s'_u$, so let us assume it does not. Let us assume that some earlier profile induced a play that reached $u$. Since Player $a$ is coming back to $u$, it must be her who left it. In particular, the outcome induced by $s'_u$ makes player lose her objective, so $s_u \rightharpoonup_a s'_u$.
\end{proof}
\end{lemma}

\begin{example}
In the following example, the numbers denote the payoffs for Player $a$. Player $b$ may be assumed to be antagonistic to $a$. We depict a lazy improvement sequence such that its projection to the left subtree is not a lazy improvement sequence -- in fact, the payoff is decreasing for the acting Player $a$. This shows that the restriction to boolean outcomes in Lemma \ref{lem:bool-subgame} is not dispensable.\\
\begin{tabular}{cccc}
\begin{tikzpicture}[level distance=7mm]
\node{a}[sibling distance=16mm]
	child{node{a}[sibling distance=8mm] edge from parent[double]
		child{node{$1$} }
		child{node{$2$} edge from parent[double]}
	}
	child{node{b}[sibling distance=8mm]
			child{node{$0$}}
			child{node{$4$} edge from parent[double]}
	};
\end{tikzpicture}
&
\begin{tikzpicture}[level distance=7mm]
\node{a}[sibling distance=16mm]
	child{node{a}[sibling distance=8mm]
		child{node{$1$} }
		child{node{$2$} edge from parent[double]}
	}
	child{node{b}[sibling distance=8mm] edge from parent[double]
			child{node{$0$}}
			child{node{$4$} edge from parent[double]}
	};
\end{tikzpicture}
&
\begin{tikzpicture}[level distance=7mm]
\node{a}[sibling distance=16mm]
	child{node{a}[sibling distance=8mm]
		child{node{$1$} }
		child{node{$2$} edge from parent[double]}
	}
	child{node{b}[sibling distance=8mm] edge from parent[double]
			child{node{$0$} edge from parent[double]}
			child{node{$4$}}
	};
\end{tikzpicture}
&
\begin{tikzpicture}[level distance=7mm]
\node{a}[sibling distance=16mm]
	child{node{a}[sibling distance=8mm] edge from parent[double]
		child{node{$1$} edge from parent[double]}
		child{node{$2$} }
	}
	child{node{b}[sibling distance=8mm]
			child{node{$0$} edge from parent[double]}
			child{node{$4$}}
	};
\end{tikzpicture}
\end{tabular}
\end{example}

We recall from descriptive set theory (a standard reference is \cite{kechris}) that a subset $S$ of a metric space is called a $\Delta^0_2$-set, if it is expressible both as $S = \bigcap_{i \in \mathbb{N}} U_i$ with open $U_i$, and as $S = \bigcup_{i \in \mathbb{N}} A_i$ with closed $A_i$. By the Hausdorff-Kuratowski theorem, the $\Delta^0_2$-subsets of $C^\omega$ are exactly those in the difference hierarchy. The difference hierarchy can be defined as follows: $\mathcal{D}_0 = \{\emptyset\}$. For some countable ordinal $\alpha > 0$, $\mathcal{D}_\alpha$ contains all sets of the form $\cup_{i\in I}(u_iC^\omega\setminus A_i)$ where the $u_i \in C^*$ are prefix independent, and each $A_i$ appears in some $\mathcal{D}_\beta$ with $\beta < \alpha$. That this indeed defines the difference hierarchy was observed by \name{Motto-Ros} \cite[Section 7]{mottoros7} extending previous work by \name{Andretta} and \name{Martin} \cite{andretta2}. A direct proof can be found in \cite{paulyleroux3-cie}.

\begin{theorem}\label{thm:lazy-dh}
Let $g$ be a game with finitely many players who have Boolean objectives. If every winning set is $\Delta^0_2$, every sequence of lazy improvement in $g$ is countable.
\begin{proof}
Let us proceed by transfinite induction on (the tuple of) the levels in the Hausdorff difference hierarchy of the winning sets $W_a$ of the players $a\in A$. The claim holds when the $W_a$ are open or closed by Lemma~\ref{lem:objectives-open-closed}, so let us assume that some $W_a$ is neither open nor closed. $W_a$ can be written $\cup_{i\in I}(u_iC^\omega\setminus A_i)$, where the $u_i$ are not prefixes of one another, where $I$ is countable since there are countably many vertices, and where each $A_i$ lies in some lower level of the difference hierarchy than $W_a$.

Towards a contradiction, let us assume that there is a $\omega_1$-sequence of lazy improvement in $g$. By Lemma~\ref{lem:bool-subgame} this induces sequences of equalities or lazy improvements in the subgame $g_i$ rooted at $u_i$ in $g$. Let $\Gamma_i$ be the set of ordinals where improvement occurs in $g_i$. The induction hypothesis implies that $\Gamma_i$ is countable. Let $\Gamma'$ be the set of the ordinals where some $g_i$ is reached for the first time. Then also $\gamma := (\sup \Gamma' \cup \bigcup_{i \in \mathbb{N}} \Gamma_i) +1$ is countable. In the truncated sequence from $\gamma$ to $\omega_1$, the induced profiles in all $g_i$ are constant. Let the $t_i$ be the corresponding Boolean tuples, and let $g'$ be derived from $g$ by fixing the outcome tuple $t_i$ all over $g_i$, for all $i$. In $g'$ the winning set of $a$ is open because it is a union of some of the $u_iC^\omega$, and the winning sets of the other players did not increase in complexity. So, by IH every sequence of lazy improvement in $g'$ is countable, contradiction.
\end{proof}
\end{theorem}

\begin{corollary}\label{cor:lazy-dh}
Let $g$ be a game with finite branching and finitely many players who have Boolean objectives. If every winning set is $\Delta^0_2$, every sequence of lazy improvement in $g$ is countable and ends at a Nash equilibrium.

\begin{proof}
By Theorem~\ref{thm:lazy-dh} and Lemma~\ref{lem:no-def-sink}, since finite branching implies compactness.
\end{proof}
\end{corollary}

Regarding a potential extension of Corollary \ref{cor:lazy-dh} to winning sets beyond $\Delta^0_2$ we shall make a tangential remark: The computational task of finding a Nash equilibrium in a two-player game in extensive form with $\Delta^0_2$ winning sets is just as hard as iterating the task of finding an accumulation point of a sequence over some countable ordinal. This follows from results in \cite{paulyleroux3-cie,paulyleroux3-arxiv,pauly-ordinals,gherardi4}. Finding a Nash equilibrium of a game with $\Sigma^0_2$ winning sets is strictly more complicated. Thus, $\Delta^0_2$ seems to be a natural boundary for results of the form of Corollary \ref{cor:lazy-dh}.

\section{Some counterexamples}
\label{sec:counter}
In order to obtain the termination result in the finite case (Theorem \ref{thm:lazy-term}), some restriction on how players can improve is indeed necessary. We shall show below that the better-response dynamics $\twoheadrightarrow$ may fail to terminate even for very simple games in extensive form:

\begin{example}
An improvement cycle:\\
\begin{tabular}{cccc}
\begin{tikzpicture}[level distance=7mm]
\node{a}[sibling distance=16mm]
	child{node{b}[sibling distance=8mm] edge from parent[double]
		child{node{$1,0$} edge from parent[double]}
		child{node{$0,1$}}
	}
	child{node{b}[sibling distance=8mm]
			child{node{$1,0$}}
			child{node{$0,1$} edge from parent[double]}
	};
\end{tikzpicture}
&
\begin{tikzpicture}[level distance=7mm]
\node{a}[sibling distance=16mm]
	child{node{b}[sibling distance=8mm] edge from parent[double]
		child{node{$1,0$}}
		child{node{$0,1$} edge from parent[double]}
	}
	child{node{b}[sibling distance=8mm]
			child{node{$1,0$} edge from parent[double]}
			child{node{$0,1$}}
	};
\end{tikzpicture}
&
\begin{tikzpicture}[level distance=7mm]
\node{a}[sibling distance=16mm]
	child{node{b}[sibling distance=8mm]
		child{node{$1,0$}}
		child{node{$0,1$} edge from parent[double]}
	}
	child{node{b}[sibling distance=8mm] edge from parent[double]
			child{node{$1,0$} edge from parent[double]}
			child{node{$0,1$}}
	};
\end{tikzpicture}
&
\begin{tikzpicture}[level distance=7mm]
\node{a}[sibling distance=16mm]
	child{node{b}[sibling distance=8mm]
		child{node{$1,0$} edge from parent[double]}
		child{node{$0,1$}}
	}
	child{node{b}[sibling distance=8mm] edge from parent[double]
			child{node{$1,0$}}
			child{node{$0,1$} edge from parent[double]}
	};
\end{tikzpicture}
\end{tabular}
\end{example}

The technical notion of strategy that is used in this article to represent the intuitive concept of a strategy (in games in extensive form) is not the only possible notion. An alternative notion does not require choices from a player at every node that she owns, but only at nodes that are not ruled out by the strategy of the same player. The three objects in Example \ref{ex:irrelevant} are such minimalist, alternative strategy profiles, where double lines still represent choices. Up to symmetry, they constitute from left to right a cycle of improvements that could be intuitively described as lazy, so an actual cycle of length eight can easily be inferred from the short pseudo cycle. This may happen because, although the improvements may look lazy, Player $a$ forgets about her choices in a subgame (of the root) when leaving it, and may settle for different choices when coming back to the subgame. This suggests that even counter-factual choices are sometimes relevant. In particular, this means that lazy improvement is not a \emph{natural} dynamics in the sense of \name{Hart} \cite{hart2}; or a \emph{simple} model in the sense of \name{Roth} and \name{Erev} \cite{erev}.

\begin{example}
\label{ex:irrelevant} Let $W$ be winning for Player $a$ and $L$ be losing; and vice versa for Player $b$. \\
\begin{tabular}{ccc}
\begin{tikzpicture}[level distance=7mm]
\node{a}[sibling distance=22mm]
	child{node{b}[sibling distance=12mm] edge from parent[double]
		child{node{a}[sibling distance=5mm] edge from parent[double]
			child{node{$W$} edge from parent[double]}
			child{node{$L$}}
		}
		child{node{a}[sibling distance=5mm]
			child{node{$W$}}
			child{node{$L$} edge from parent[double]}
		}
	}
	child{node{b}[sibling distance=12mm]
		child{node{a}[sibling distance=5mm] edge from parent[double]
			child{node{$W$}}
			child{node{$L$}}
		}
		child{node{a}[sibling distance=5mm]
			child{node{$W$}}
			child{node{$L$}}
		}
	}
	;
\end{tikzpicture}
&
\begin{tikzpicture}[level distance=7mm]
\node{a}[sibling distance=22mm]
	child{node{b}[sibling distance=12mm] edge from parent[double]
		child{node{a}[sibling distance=5mm]
			child{node{$W$} edge from parent[double]}
			child{node{$L$}}
		}
		child{node{a}[sibling distance=5mm] edge from parent[double]
			child{node{$W$}}
			child{node{$L$} edge from parent[double]}
		}
	}
	child{node{b}[sibling distance=12mm]
		child{node{a}[sibling distance=5mm] edge from parent[double]
			child{node{$W$}}
			child{node{$L$}}
		}
		child{node{a}[sibling distance=5mm]
			child{node{$W$}}
			child{node{$L$}}
		}
	}
	;
\end{tikzpicture}
&
\begin{tikzpicture}[level distance=7mm]
\node{a}[sibling distance=22mm]
	child{node{b}[sibling distance=12mm]
		child{node{a}[sibling distance=5mm]
			child{node{$W$}}
			child{node{$L$}}
		}
		child{node{a}[sibling distance=5mm] edge from parent[double]
			child{node{$W$}}
			child{node{$L$}}
		}
	}
	child{node{b}[sibling distance=12mm] edge from parent[double]
		child{node{a}[sibling distance=5mm] edge from parent[double]
			child{node{$W$} edge from parent[double]}
			child{node{$L$}}
		}
		child{node{a}[sibling distance=5mm]
			child{node{$W$}}
			child{node{$L$} edge from parent[double]}
		}
	}
	;
\end{tikzpicture}
\end{tabular}
\end{example}

The example below shows that for infinite games, a sequence of lazy improvement steps may have multiple accumulation points even for continuous payoff functions; and moreover, that not all accumulation points have to be Nash equilibria.
\begin{example}[{\cite[Example 26]{paulyleroux2}}]
\label{ex:twentysix}
\end{example}

\begin{wrapfigure}{r}{0.5\textwidth}
\begin{tikzpicture}[level distance=8mm]
\node{c}[sibling distance=25mm]
	child{node{$\alpha_0,\beta_0,\gamma_0,\delta_0$} edge from parent[solid,double]}
	child{node{d}[sibling distance=25mm]
		child{node{$\alpha_1,\beta_1,\gamma_1,\delta_1$} edge from parent[double]}
		child{node{c} [sibling distance=25mm] edge from parent[dashed]
		child{node{$\alpha_n,\beta_n,\gamma_n,\delta_n$} edge from parent[double,solid]}
			child{node{} edge from parent[dashed]
				child{node{} edge from parent[draw=none]}
				child{node{$\alpha,\beta,\gamma,\delta$} edge from parent[dashed]}
			}
		}
	};
\end{tikzpicture}
\end{wrapfigure}

Let us consider games with four players $a$, $b$, $c$, and $d$. Given four real-valued sequences $\mathcal{A}=(\alpha_n)_{n\in\mathbb{N}}$, $\mathcal{B}=(\beta_n)_{n\in\mathbb{N}}$, $\mathcal{C}=(\gamma_n)_{n\in\mathbb{N}}$, and $\mathcal{D}=(\delta_n)_{n\in\mathbb{N}}$ converging towards $\alpha$, $\beta$, $\gamma$, and $\delta$, let $T(\mathcal{A},\mathcal{B},\mathcal{C},\mathcal{D})$ be the following game and strategy profile. Note that apart from the payoffs, the underlying game effectively involves players $c$ and $d$ only. If $\mathcal{C}$ and $\mathcal{D}$ are increasing, the lazy improvement dynamics sees players $c$ and $d$ alternating in switching their top left-move to a right-move.

Let $\mathcal{A} := \mathcal{B} := (1+\frac{1}{n+1})_{n\in\mathbb{N}}$ and let $\mathcal{C} := \mathcal{D} := (1-\frac{1}{n+1})_{n\in\mathbb{N}}$. Starting from the profile below, players $c$ and $d$ will continue to unravel the subgame currently chosen jointly by $a$ and $b$. Player $b$ will keep alternating her choices to pick the least-unraveled subgame available to her. Player $a$ will prefer to chose a subgame where Player $b$ currently chooses right, and also prefers less-unraveled subgames.

\begin{tikzpicture}[level distance=8mm]
\node{a}[sibling distance=45mm]
	child{node{b}[sibling distance=22mm] edge from parent[double]
		child{node{\footnotesize $T(\mathcal{A},\mathcal{B},\mathcal{C},\mathcal{D})$}}
		child{node{\footnotesize $T(1+\mathcal{A},\mathcal{B},\mathcal{C},\mathcal{D})$} edge from parent[double]}
	}
	child{node{b}[sibling distance=22mm]
		child{node{\footnotesize $T(\mathcal{A},\mathcal{B},\mathcal{C},\mathcal{D})$}}
		child{node{\footnotesize $T(1+\mathcal{A},\mathcal{B},\mathcal{C},\mathcal{D})$} edge from parent[double]}
	};
\end{tikzpicture}

First of all, already the subgame where $b$ moves first demonstrates that the lazy improvement dynamics will not always converge, hence we have to consider accumulation points rather than limit points. For the next feature, note that there is an infinite sequence of lazy
improvement where players $a$ and $b$ (at both nodes that she owns) switch
infinitely often, and where Player $a$ switches only when Player $b$
chooses the right subgame (on the induced play). Then the following
strategy profile
is an accumulation point, but it is clearly not a Nash equilibrium.

\begin{tikzpicture}[level distance=8mm]
\node{a}[sibling distance=45mm]
	child{node{b}[sibling distance=22mm] edge from parent[double]
		child{node{$1,1,1,1$} edge from parent[double,dashed]}
		child{node{$2,1,1,1$}edge from parent[dashed]}
	}
	child{node{b}[sibling distance=22mm]
		child{node{$1,1,1,1$}edge from parent[dashed]}
		child{node{$2,1,1,1$} edge from parent[double,dashed]}
	};
\end{tikzpicture}

In our current model the players perform lazy improvement updates in a sequential manner. If simultaneity were allowed (yet not compulsory), cycles could occur, as shown in the example below.

\begin{example} It is a cycle up to symmetry only, a proper cycle of length $4$ may be easily derived from it.
\begin{tabular}{ccccc}
\begin{tikzpicture}[level distance=7mm]
\node{a}[sibling distance=16mm]
	child{node{b}[sibling distance=8mm] edge from parent[double]
		child{node{a}[sibling distance=8mm]edge from parent[double]
			child{node{$3,2$}}
			child{node{$2,0$}edge from parent[double]}
		}
		child{node{$1,1$}[sibling distance=8mm]}
	}
	child{node{b}[sibling distance=8mm]
		child{node{$1,1$}[sibling distance=8mm]}
		child{node{a}[sibling distance=8mm]edge from parent[double]
			child{node{$2,0$}}
			child{node{$3,2$}edge from parent[double]}
		}
	}
	;
\end{tikzpicture}
&
\begin{tikzpicture}[level distance=7mm]
\node{a}[sibling distance=16mm]
	child{node{b}[sibling distance=8mm] edge from parent[double]
		child{node{a}[sibling distance=8mm]
			child{node{$3,2$}edge from parent[double]}
			child{node{$2,0$}}
		}
		child{node{$1,1$}[sibling distance=8mm]edge from parent[double]}
	}
	child{node{b}[sibling distance=8mm]
		child{node{$1,1$}[sibling distance=8mm]}
		child{node{a}[sibling distance=8mm]edge from parent[double]
			child{node{$2,0$}}
			child{node{$3,2$}edge from parent[double]}
		}
	}
	;
\end{tikzpicture}
&
\begin{tikzpicture}[level distance=7mm]
\node{a}[sibling distance=16mm]
	child{node{b}[sibling distance=8mm]
		child{node{a}[sibling distance=8mm]edge from parent[double]
			child{node{$3,2$}edge from parent[double]}
			child{node{$2,0$}}
		}
		child{node{$1,1$}[sibling distance=8mm]}
	}
	child{node{b}[sibling distance=8mm]edge from parent[double]
		child{node{$1,1$}[sibling distance=8mm]}
		child{node{a}[sibling distance=8mm]edge from parent[double]
			child{node{$2,0$}edge from parent[double]}
			child{node{$3,2$}}
		}
	}
	;
\end{tikzpicture}
\end{tabular}
\end{example}

This behaviour can be avoided by considering lazy best-response dynamics, rather than merely lazy better-response. In the sequential case, clearly the termination of the latter implies termination of the former. In the simultaneous case we find the following.

\begin{proposition}\label{prop:synclazybestresp}
The synchronous lazy best-response sequences in a game with $n$ internal nodes have length at most $2^n$, provided that the players have acyclic preferences.

\begin{proof}
It suffices to prove the claim for preferences that are linear orders, which we prove by induction on the number of internal nodes of the game $g$. (It holds for zero.) Let $v$ be an internal node in $g$ whose children are all leaves, let $a$ be the owner of $v$, and let us consider a sequence where $a$ always chooses the same outcome $x$ at $v$. Let $g'$ be the game derived from $g$ by replacing $v$ with a leaf enclosing the outcome $x$. The synchronous lazy best-response sequence in $g$ corresponds, by restriction of the profiles, to a sequence in $g'$, so it has length at most $2^{n-1}$ by I.H. Now let us consider an arbitrary sequence, and note that $a$ can change choices only once at $v$, from some non-preferred outcome to her preferred one (among the outcomes occurring below $v$). So the length of a sequence in $g$ is at most $2^{n-1} + 2^{n-1} = 2^{n}$.
\end{proof}
\end{proposition}

\section*{Acknowledgements}
This work benefited from the Royal Society International Exchange Grant IE111233 (while Le Roux was at the TU Darmstadt and Pauly at the University of Cambridge). The authors were partially supported by the ERC inVEST (279499)
project.

We are grateful to Dietmar Berwanger, Victor Poupet, and Martin Ziegler for helpful discussions, and to an anonymous referee for his or her helpful comments on a previous version of this paper.

\bibliographystyle{eptcs}
\bibliography{../../spieltheorie}

\end{document}